\newtheorem{theorem}{Theorem}
\newtheorem{corollary}[theorem]{Corollary}
\newtheorem{lemma}[theorem]{Lemma}
\newtheorem{remark}[theorem]{Remark}
\newtheorem{definition}[theorem]{Definition}
\newtheorem{assumption}[theorem]{Assumption}
\newcommand{\dd}{\mathrm{d}}
\def\BibTeX{{\rm B\kern-.05em{\sc i\kern-.025em b}\kern-.08em
    T\kern-.1667em\lower.7ex\hbox{E}\kern-.125emX}}
\begin{document}

\title{Global Optimization of Relay Placement for 
\\
Seafloor Optical Wireless Networks\thanks{This work was supported in part by
JSPS KAKENHI Grant Number 18K18007.}%
}

\author{
Yoshiaki Inoue\thanks{Y.\ Inoue is with
Department of Information and Communications Technology, Graduate
School of Engineering, Osaka University, Suita 565-0821, Japan
(e-mail: yoshiaki@comm.eng.osaka-u.ac.jp).},
Takahiro Kodama\thanks{T.\ Kodama is with Faculty of Engineering and
Design, Kagawa University,  Takamatsu 761-0396, Japan 
(e-mail: tkodama@eng.kagawa-u.ac.jp).
},
and
Tomotaka Kimura\thanks{T. Kimura is with 
Faculty of Science and Engineering, Doshisha University, Kyotanabe 610-0394, Japan
(e-mail: tomkimur@mail.doshisha.ac.jp)
}
}

\maketitle

\allowdisplaybreaks

\begin{abstract}

Optical wireless communication is a promising technology for
underwater broadband access networks, which are particularly important
for high-resolution environmental monitoring applications.
This paper focuses on a deep-sea monitoring system, where an underwater
optical wireless network is deployed on the seafloor. 
We model such an optical wireless network as a general queueing network and
formulate an optimal relay placement problem, whose objective is to
maximize the stability region of the whole system, i.e., the supremum
of the traffic volume that the network is capable of accommodating.
The formulated optimization problem is further shown to be non-convex, 
so that its global optimization is non-trivial.
In this paper, we develop a global optimization method for this
problem and we provide an efficient algorithm to compute an optimal
solution. Through numerical evaluations, we show that a significant
performance gain can be obtained by using the derived optimal solution. 

\end{abstract}

\noindent
\textbf{Keywords: }
Underwater communication, Visible light, Optical network, 
Queueing network, Global optimization, Reverse convex programming.

\section{Introduction}

Real-time monitoring of underwater environments, such
as ocean trenches and submarine volcanoes, is of great importance for
scientific research toward the prevention and mitigation of natural disasters.
In such monitoring applications, underwater wireless communication is
a key enabling technology for bringing data from seafloor sensors to
terrestrial base stations \cite{Fele15}. 
Traditionally, acoustic signals have been the primary medium for
underwater wireless communications due to their ability to propagate
over long distances with little energy dissipation.
However, the main weakness of the acoustic channel is the quite
limited data transmission capacity, which is inherent in the use of
kHz-class carrier frequencies. Therefore, acoustic-based underwater
communication networks cannot accommodate the large amount of traffic
generated by high-specification sensors such as underwater LIDARs and
video cameras \cite{Akyi05}, which will be essential for near-future
real-time underwater monitoring systems.

Underwater optical wireless communication (UOWC) is a promising
solution to this problem, which can achieve data rates of several
hundred Mbps to about ten Gbps, provided that the transmission range
is limited to tens to hundreds of meters \cite{Kaushal,Zeng17}.
Because of this limitation on the propagation distance, it is
necessary for the practical use of UOWC to construct 
\textit{a networked optical wireless infrastructure} consisting of
multiple relay nodes. Such an underwater network is called an
underwater optical wireless network (UOWN), and its optimal design has
become a major challenge for realizing underwater real-time monitoring
applications.
Although a wired link (optical fiber) can also be considered as a
connection method between relay nodes, this paper focuses on a relay
system that is interconnected with wireless optical communication,
because the ease of relocation provides operational flexibility
desirable for seafloor monitoring systems that are currently under
development.

\subsubsection*{Motivation} 
Most previous works on UOWNs assume \textit{vertical} network
architectures \cite{Zeng17,Celik20,Celik19,Saeed19-2,Saeed19-3,Saeed19-4},
where data packets generated by seafloor sensors are
transferred to a terrestrial base station in multi-hop fashion via
vertically deployed optical wireless relay nodes.
In such a vertical network architecture, autonomous underwater
vehicles (AUVs) hovering in the water are inevitably used as relay
nodes in addition to those anchored to the seafloor. 

Such architectures with relay AUVs are targeted at
relatively shallow marine environments with depths not exceeding 1000
meters, and their use for \textit{deep-sea monitoring} is impractical
due to the following two reasons. 
Firstly, the monitoring of deep-sea environments with a vertical
network requires a very large number of AUV relay nodes to connect
seafloor sensors to nodes at the sea surface, resulting in enormous
costs. Secondly, the AUV relay nodes must be controlled to keep
hovering in the turbulent water, making it difficult to keep all the
links stable. To the best of our knowledge, there has not been
sufficient attention paid to investigating network architectures that
can solve these problems of deep-sea environment monitoring.

\subsubsection*{Contributions}
This paper proposes \textit{a seafloor optical wireless network (SOWN)}, 
which enables efficient data acquisition from deep-seafloor
environments without employing hovering AUV relay nodes. The main
components of the proposed SOWN are (i) a
terrestrial base station, (ii) a sink node on the seafloor connected
to the terrestrial base station with an optical fiber, and (iii)
anchored relay nodes horizontally deployed on the seafloor;
see Fig.~\ref{fig:network_image} for an illustration. 
The SOWN serves as an infrastructure to accommodate data traffic
originating from a variety of sensors on the seafloor. The sensing data
generated by each sensor is first collected at
the nearest relay node, then delivered to the sink node by optical
wireless multi-hop transmission, and then transferred to the
terrestrial base station via the optical fiber.
Its main advantage being constructed without hovering AUV nodes,
the SOWN is a suitable network architecture for deep-sea monitoring
systems in terms of the cost-effectiveness and 
stability.

\begin{figure}
 \centering
 \includegraphics[width=0.88\linewidth]{./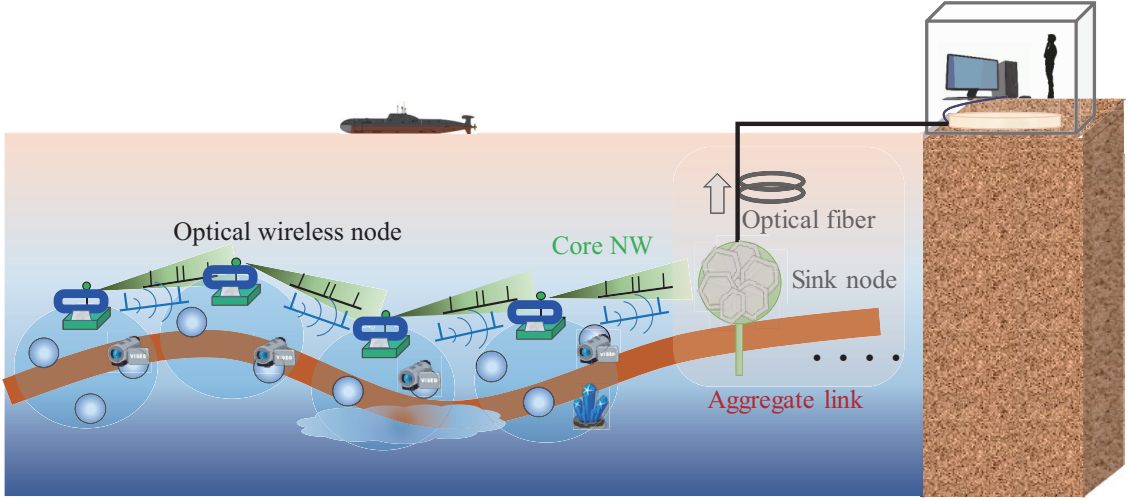}
 \caption{A seafloor optical wireless network.}
 \label{fig:network_image}
\end{figure}

The main focus of this paper is on the development of \textit{an
optimal relay placement method}, which is the most fundamental
challenge toward the optimal design of the SOWN.
Underwater relay-node placement problems have
traditionally been discussed for acoustic-based networks
\cite{Kam14,Sou16,Liu17,Pras18}, where it is
known to be optimal to use a constant relay spacing \cite{Kam14},
provided that the carrier frequency is appropriately selected.
The key observation in this paper, however, is that such a
constant-spaced relay placement cannot fully extract the
transmission capacity of the whole network in the SOWN, but rather
a placement with \textit{optimally determined non-constant node spacing}
significantly improves the network performance.
Such an improvement basically stems from the fact that the capacity of
an underwater optical channel is significantly affected by the node
distance, due to the rapid attenuation of the optical signal with
propagation distance \cite{Giles05,Doniec13,Elam19}. In order to efficiently
utilize the resources of the whole network, it is then necessary to
arrange relay nodes in such a way that \textit{the distance between
the nodes gradually increases} from the sink node to the end (leaf) node,
because optical wireless links close to the sink node have
to relay a large amount of sensing data transferred from upstream
nodes and require a larger channel capacity than those away from the
sink node. 

In this paper, we make this idea concrete by modeling the relay
placement in the SOWN mathematically and performing its detailed analysis.
More specifically, we first introduce a queueing-network model whose
input process differs depending on the relay-node placement, under a
mild assumption that the packet generation follows a general
stationary point process. 
Using this model, we then formulate an optimal relay placement problem
that aims to maximize \textit{the stability region} of the whole system.
The stability region is defined as the range of total traffic load
that the network can accommodate without exceeding the capacity of any
communication links, which is of primary importance in designing
communication networks because it determines the fundamental
performance limit of the system as well known in the queueing theory.

The main technical challenge we have to address in this relay placement
problem is that the formulated optimization problem is inherently
\textit{non-convex}, as will be shown later.
Therefore its global optimization is non-trivial, and general-purpose
off-the-shelf algorithms can basically yield only local optimal values.
In this paper, we perform a detailed theoretical analysis of the
optimal relay placement problem and develop a global optimization
algorithm that can be executed quite efficiently.

As an initial study of the relay placement problem in the SOWN, 
this paper mainly focuses on a one-dimensional network, i.e., the 
case where relay nodes are placed along a straight line.
Although this assumption may restrict the direct applicability of the
results to be obtained, this simplification allows us to reveal the
exact structural properties of the global optimal solution, as will be
shown in this paper. Since the one-dimensional network is a
fundamental building block of a more general two- or three-dimensional
UOWNs, the mathematical analysis developed in this paper also provides
theoretical insights into the network design of such general UOWNs;
we shall later demonstrate how the mathematical results obtained for
the one-dimensional network can be extended to the two-dimensional
case. It is also worth noting that the one-dimensional SOWN itself has
important practical applications for mitigating natural disasters 
(particularly earthquakes), such as high-resolution real-time
monitoring of ocean trenches.

\subsubsection*{Organization}
The rest of this paper is organized as follows. In Section
\ref{sec:related_works}, we provide a brief review of previous studies
related to UOWNs and relay placement problems. In Section
\ref{sec:model}, we introduce a queueing-network model representing
the SOWN and formulate an optimal relay placement problem based on it.
In Section \ref{sec:global_optimization}, we develop a global
optimization method for the relay placement problem and investigate
the mathematical structure of the obtained optimal solution. In Section
\ref{sec:numerical}, we first examine the performance of the obtained
optimal solution through extensive numerical experiments. 
In particular, we show that the optimal placement with non-constant
spacing significantly improves the system performance compared to the
constant spacing case. We then demonstrate an extension of the
obtained results to a two-dimensional seafloor network.
Finally, this paper is concluded in Section \ref{sec:conclusions}.

\section{Related works}
\label{sec:related_works}

Gbps-class transmission capacity in UOWC has been achieved by using
visible light bands, where the effects of absorption and scattering
losses are relatively small.
UOWC is still in the early stages of development, and several
demonstration experiments have been carried out in recent years
\cite{Nakamura,Oubei,A-Halafi,Wu17,Kodama}.
On the other hand, theoretical investigations on UOWC channel
characteristics have been carried out from earlier years, and 
various channel models have been proposed.
Giles and Bankman \cite{Giles05} derived a basic signal-to-noise ratio
(SNR) formula for UOWC channels, which was further extended to an
end-to-end signal strength model by Doniec et al.\ \cite{Doniec13},
where its validity was confirmed in a real system.
Elamassie et al.\ \cite{Elam19} have also extended this SNR formula
and proposed a correction that takes into account the contribution of
the scattered light that partially reaches the detector.
For a more detailed characterization of the UOWC channel, 
Tang et al.\ \cite{Tang} have proposed a channel impulse response
model with a double gamma function. Jaruwatanadilok \cite{Jaruwatanadilok}
has developed a channel model based on radiative transfer theory as
well and Zhang et al.\ \cite{Zhang} have presented a stochastic
channel model representing the spatiotemporal probability distribution
of propagating photons, taking into account the non-scattering,
single-scattering, and multiple-scattering components.

From the perspective of UOWC networking, Akhoundi et al.\ \cite{Akhoundi15} 
have introduced an optical code-division multiple access (CDMA) underwater
cellular network and evaluated its performance in several water types.
Optical CDMA underwater networks have been further studied by Jamali
et al.\ \cite{Jamali16}, reflecting the turbulent behavior of
underwater channels. Jamali et al.\ \cite{Jamali17} have also
presented the benefits of serial relayed multi-hop transmission using
a bit detection and transfer (BDF) strategy, showing that multi-hop
transmission can significantly improve system performance by
mitigating adverse effects on all channels.
Vavoulas et al.\ \cite{Vavoulas14} have studied an effective path loss
model in UOWC and characterized the connectivity of long-distance
underwater communications.
Saeed et al.\ have analyzed network localization performances 
in terms of the network connectivity in \cite{Saeed19-3} 
and proposed a localization framework for energy
harvesting nodes in \cite{Saeed19-4}. In \cite{Saeed19-1}, they have
also discussed an optimal placement of seaface anchor nodes in terms
of the localization accuracy.
To evaluate the performance of a video streaming under the sea,
Al-Halafi et al.\ \cite{Al-Ha19} have modeled UOWC channels with M/G/1
queues, assuming that there are multiple laser diodes in
the transmitter and multiple avalanche photodiodes in the receiver.
Celik et al.\ \cite{Celik20} have analyzed the end-to-end bit error
rates for the decode and forward (DF) and amplify and forward (AF)
relaying in a vertical UOWN. Furthermore, in \cite{Celik19}, a
sector-based opportunistic routing protocol has been devised where
packets are transmitted simultaneously to multiple relay nodes that
fall within the range of a directional beam.
Xing et al.\ \cite{Xing20} have investigated problems of minimizing
energy consumption and maximizing SNR by performing relay node
selection and power allocation simultaneously in the AF scheme.

As mentioned earlier, relay-node placement under water has been
studied in the context of acoustic communication systems.
Kam et al.\ \cite{Kam14} considered a problem of optimizing the
frequency and node location to minimize the energy consumption.
Souza et al.\ \cite{Sou16} considered the minimization of energy
consumption taking into account the optimal number of hops,
retransmission, coding rate, and SNR, where the distance between nodes
of each hop is assumed to be constant.
Liu et al.\ \cite{Liu17} have developed flow assignment and relay node
placement methods in a vertical UOWN to maximize network lifetime,
where it is assumed that relay nodes are fixed in horizontal
coordinates and can be changed only in vertical coordinates.
Prasad et al.\ \cite{Pras18} have discussed a problem for a two-hop
network that minimizes the probability of receiving power falling
below an outage-data-rate threshold by properly controlling the locations of
relay nodes and the transmission power.

\section{Model}
\label{sec:model}

Throughout the paper, we follow the convention that for any
$k$-dimensional ($k=1,2,\ldots$) vector $\bm{y} \in \mathbb{R}^k$, its
$i$th element is denoted by
$y_i$. We further define empty sum terms as zero.

Let $\mathcal{N} = \{1,2,\ldots,N\}$ ($N = 1,2,\ldots$) denote the set
of relay nodes. Relay nodes are aligned on a subset $\mathcal{L} := [0, L]$
of the real half-line $\mathbb{R}^+$, and the sink node is placed at
the origin $x=0$. Let $x_n$ ($n = 1,2,\ldots,N$) denote
the position of the $n$th node. We assume $0 \leq x_n \leq
x_{n+1}$ ($n = 1,2,\ldots,N-1$) without loss of generality. We call
the sink node `the $0$th' node, so that $x_0 := 0$ is defined
accordingly. We assume that $x_N = L$ holds and that the
one-dimensional region $\mathcal{L}$ is completely covered by the sink
node and $N$ relay nodes as described below.

We assume that generation times of data packets follow a general
stationary point process and that the generation points of those
packets are uniformly distributed on $\mathcal{L}$. Each packet is
first collected by the nearest node from its generation point and
then transferred to the sink node with multi-hop transmissions. 
More formally, we define the coverage area $\mathcal{C}_n \subseteq \mathbb{R}^+$ ($n
= 0,1,\ldots,N$) of the $n$th node as its Voronoi cell, 
which is given by a half-open interval $\mathcal{C}_n = [a_n, b_n)$
with
\if0
\begin{align}
a_0 &= 0,
&&a_n = \frac{x_{n-1}+x_n}{2},
\;\;
n = 1,2,\ldots,N,
\label{eq:C_n-a}
\\
b_N &= x_N,
&&b_n = a_{n+1}, 
\;\;
n = 0,1,\ldots, N-1.
\label{eq:C_n-b}
\end{align}
\fi
\begin{equation}
a_0 = 0,
\;\;
a_n = \frac{x_{n-1}+x_n}{2},
\;
n = 1,2,\ldots,N,
\quad
b_n = a_{n+1}, 
\;
n = 0,1,\ldots, N-1,
\;\;
b_N = x_N.
\label{eq:C_n}
\end{equation}
See Fig.\ \ref{fig:covering} for an illustration.
Clearly we have $\cup_{n=0}^N \mathcal{C}_n = [0, x_N)$ and 
$\mathcal{C}_i \cap  \mathcal{C}_j = \emptyset$ for $i \neq j$.
We further assume that packet transmissions are performed in the
store-and-forward manner (DF relaying, in other words). 
The system is then represented as a network of $N$ G/G/1 queues
depicted in Fig.\ \ref{fig:GG1}, where $\lambda$ denotes the
mean number of generated packets per unit time (within the whole covered
area $\mathcal{L}$) and $B$ denotes the mean data size.

\begin{figure}[tbp]
\centering
\begin{minipage}[t]{0.48\textwidth}
\includegraphics[scale=0.5]{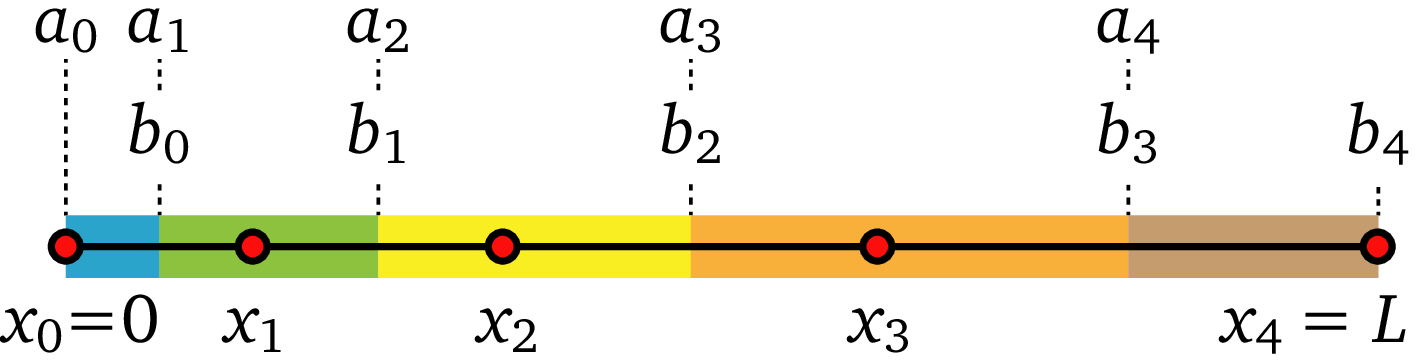}
\vspace{-2ex}
\caption{An illustration of the system model ($N=4$).}
\label{fig:covering}
\end{minipage}
\;\;
\begin{minipage}[t]{0.48\textwidth}
\includegraphics[scale=0.5]{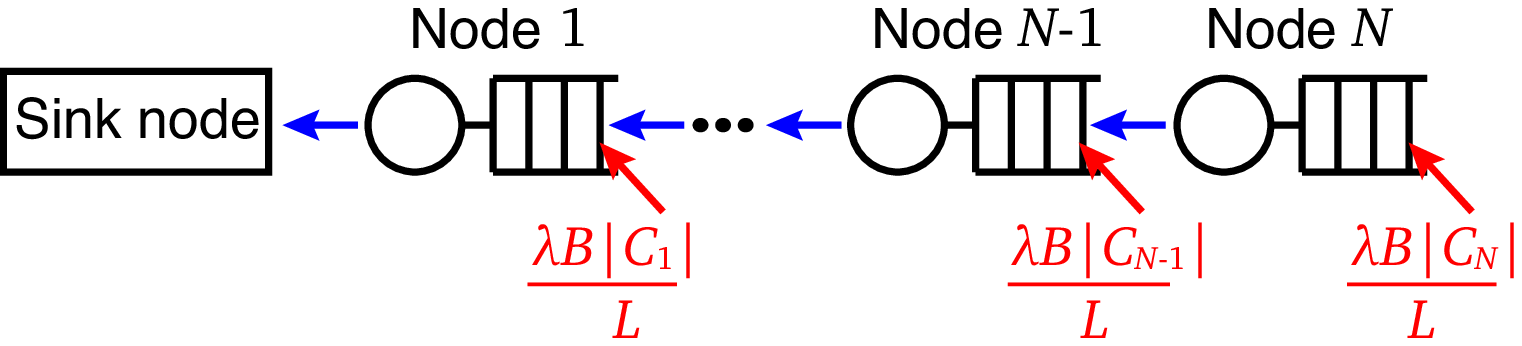}
\vspace{-2ex}
\caption{The SOWN modeled as a network of G/G/1 queues.}
\label{fig:GG1}
\end{minipage}
\end{figure}

We define $\rho_n$ ($n = 1,2,\ldots,N$) as the traffic intensity of
external arrivals to the $n$th node, i.e.,
\begin{equation}
\rho_n = \frac{\lambda |\mathcal{C}_n|}{L} \cdot B
= q |\mathcal{C}_n|,
\qquad
q := \frac{\lambda B}{L}.
\label{eq:rho_n-def}
\end{equation}
Observe that $q$ represents the amount of data brought into the
system per unit time, normalized by the area length.
Owing to \cite[Page 142]{Baccelli03}, the stability condition of 
this system is given by that for each node $i$, the total traffic
intensity of relayed packets does not exceed the link capacity:
\begin{equation}
\sum_{n=i}^N \rho_n < R(d_i),
\;\;
i \in \mathcal{N},
\label{eq:stability}
\end{equation}
where $d_n := x_n - x_{n-1}$ ($n = 1,2,\ldots$) denotes the
distance between the ($n-1$)st and $n$th nodes, and
$R(d)$ ($d \geq 0$) denotes \textit{the effective transmission
rate} between two nodes with distance $d$, which is formulated as follows. 

Let $\mathrm{SNR}(d)$ ($d \geq 0$) denote the electrical SNR at distance $d$. 
A widely used model \cite{Giles05,Doniec13,Elam19,Xing20,Arnon09}
for representing the SNR of a UOWC channel is that
the $\mathrm{SNR}(d)$ takes a form proportional to
$d^{-\alpha}e^{-Kd}$ for some coefficients $\alpha > 0$ and $K > 0$.
In this expression, $d^{-\alpha}$ represents the signal attenuation
due to the geometric spreading of the light beam and $\alpha=2$ is
usually used to represent the spherical spreading.
On the other hand, $e^{-Kd}$ represents the contribution of absorption
and scattering losses, and $K$ is given by the sum of the absorption
and scattering coefficients, which vary depending on the type of water
and the light wavelength. Readers are referred to \cite{Kaushal,Saeed19-2,Doniec13,Mobley}
for more detailed explanations on such a theoretical characterization
and its validation in a real system.

In this paper, to avoid the singularity of $d^{-\alpha}$ at the
origin, we consider the following bounded expression for
$\mathrm{SNR}(d)$, with a small $\epsilon > 0$:
\begin{equation}
\mathrm{SNR}(d)
= 
Ae^{-Kd}(\epsilon+d)^{-\alpha},
\label{eq:SNR}
\end{equation}
where $A$ denotes a constant that depends on physical parameters (an
example will be given later in Section \ref{sec:numerical}).
It should be noted here that $\epsilon$ does not have a specific
physical meaning: it is a parameter intended to correct the singular
behavior that $d^{-\alpha}$ diverges near the origin, and the value of
$\epsilon$ has little effect on $\mathrm{SNR}(d)$ unless $d$ is very small
(such a correction term is often used in the radio communication
literature \cite{Andrews11}).
Owing to the Shannon-Hartley theorem, with $W$ denoting the bandwidth,
the effective transmission rate $R(d)$ ($d \geq 0$) is then expressed as
\begin{equation}
R(d) = W\log(1 + \mathrm{SNR}(d)).
\label{eq:R-Shannon}
\end{equation}

In order not to restrict the applicability of our theoretical results,
however, we do not assume any specific expression for the function
$R(d)$ in performing mathematical analysis below.
Instead, we make only the following assumption on $R(d)$, which is
clearly satisfied by (\ref{eq:SNR}) and (\ref{eq:R-Shannon}):
\begin{assumption} \label{assumption:R-convex}
The effective transmission rate $R: [0,\infty) \rightarrow
[0,\infty)$ is a strictly decreasing, continuously differentiable
convex function of the node distance, and $\lim_{d \to \infty}R(d) = 0$.
\end{assumption}

\begin{remark}
Another example of an expression for $R(d)$ (other than the Shannon capacity
(\ref{eq:R-Shannon})) is given as follows.
Suppose that sensing information is coded and modulated with
(i) the modulation level $M$ [bits/symbol] and
(ii) a forward-error-correction (FEC) code with code rate $\eta$ ($0 <
\eta < 1$). 
Also suppose that the FEC code enables the receiver to decode the
signal with a negligible error-rate, provided that the SNR does not
exceed a threshold $\zeta$. This is an abstraction of UOWC channels
implemented with standard modulation techniques, such as the on-off keying (OOK)  
and the quadrature amplitude modulation (QAM) \cite{Lavery19}.

In this setting, it is reasonable that the transmitter uses the
maximum symbol rate (which equals the bandwidth $W$) such that the
constraint
$\mathrm{SNR}(d) \leq \zeta$ for error-free transmission is satisfied.
Assuming that the noise spectral density is constant 
(i.e., white noise) over the operating frequency range, the
expression (\ref{eq:SNR}) is rewritten as
$
\mathrm{SNR}(d) = A'e^{-Kd}(\epsilon + d)^{-\alpha}W^{-1},
$
where $A'$ does not depend on the symbol rate $W$. 
$\mathrm{SNR}(d)$ then decreases with $W$, so that the maximum symbol
rate is achieved if (and only if) $\mathrm{SNR}(d) = \zeta$, i.e.,
$
W = A'e^{-Kd}(\epsilon + d)^{-\alpha}\zeta^{-1}.
$
As the effective transmission rate equals $\eta \cdot M \cdot W$,
we then conclude
$
R(d) = \eta M A'e^{-Kd}(\epsilon + d)^{-\alpha}\zeta^{-1},
$
which clearly satisfies Assumption \ref{assumption:R-convex}.
\end{remark}

\begin{remark}
A refinement of the SNR equation correcting the exponential term
as $e^{-Kd^{\beta}}$ ($\beta \in (0,1]$) is proposed in
\cite{Elam19}; the discussion above is still valid under such an extension.
\end{remark}

We see from (\ref{eq:rho_n-def}) and (\ref{eq:stability}) that the
stability region of the system varies depending on the node placement 
$\bm{x} := (x_1, x_2,\ldots, x_N)^{\top}$. Let $\rho_n(q,\bm{x})$
($n=1,2,\ldots,N$) denote the traffic intensity $\rho_n$ of external
arrivals to the $n$th node, represented as a function of the
normalized traffic intensity $q$ and the placement of relay nodes $\bm{x}$ 
(cf.\ (\ref{eq:rho_n-def})). The size of the stability region is
characterized by \textit{the normalized throughput limit}
$q_{\sup}(\bm{x})$, which is defined as the least upper bound of the
normalized throughput $q$ for which the system is stable:
\begin{align}
q_{\sup}(\bm{x}) 
&= 
\sup\biggl\{q \in \mathbb{R}^+ \ \biggl|\
\sum_{n=i}^N \rho_n(q,\bm{x}) < R(x_i-x_{i-1}),\,
i \in \mathcal{N}
\biggr\}
\nonumber
\\
&=
\max\biggl\{q \in \mathbb{R}^+ \ \biggl|\
\sum_{n=i}^N \rho_n(q,\bm{x}) \leq R(x_i-x_{i-1}),\,
i \in \mathcal{N}
\biggr\}.
\label{eq:q_sup-def}
\end{align}

The size of the stability region (i.e., the value of the normalized
throughput limit $q_{\sup}(\bm{x})$) is the most fundamental
performance metric in designing the communication network. 
In this paper, we thus employ $q_{\sup}(\bm{x})$ as the objective
function of our optimal placement problem. Specifically, we
develop a solution method to the following optimization problem:
\begin{align*}
\underset{\bm{x} \in \mathbb{R}^N}{\mathrm{maximize}}
\;\;
q_{\sup}(\bm{x})
\quad
\mathrm{s.t.}
\;\;
x_N = L,
\;\;
x_{i+1} \geq x_i \geq 0,
\;\;
i \in \{1,2,\ldots,N-1\},
\end{align*}
which is rewritten as (cf.\ (\ref{eq:q_sup-def}))
\begin{align}
\underset{q \in \mathbb{R},\, \bm{x} \in \mathbb{R}^N}{\mathrm{maximize}}
\;\;
q
\quad
\mathrm{s.t.}\;\;&
\sum_{n=i}^N \rho_n(q,\bm{x}) \leq R(x_i-x_{i-1}),
\;\;
i \in \{1,2,\ldots,N\},
\nonumber
\\
&
x_N = L,
\;\;
x_{i+1} \geq x_i,
\;\;
i \in \{1,2,\ldots,N-1\},
\nonumber
\\
&
q \geq 0,
\;\;
x_i \geq 0,
\;\;
i \in \{1,2,\ldots,N\}.
\tag{U${}_0$}
\label{opt:bandwidth}
\end{align}
Note that an optimal solution of (\ref{opt:bandwidth}) provides 
not only an optimal placement $\bm{x}^*$ of relay nodes but also
the achievable maximum value of the normalized throughput limit $q_{\sup}^* := q_{\sup}(\bm{x}^*)$.

\section{Global optimization method}
\label{sec:global_optimization}

In this section, we develop a global optimization method for
(\ref{opt:bandwidth}). We start with rewriting (\ref{opt:bandwidth})
into a more comprehensive form. We have from 
(\ref{eq:C_n}) and (\ref{eq:rho_n-def}),
\begin{equation}
\rho_n(q,\bm{x}) = \frac{q(x_{n+1}-x_{n-1})}{2},
\;\;
n = 1,2,\ldots,N-1,
\quad
\rho_N(q,\bm{x}) = \frac{q(x_N - x_{N-1})}{2},
\end{equation}
so that we can rewrite (\ref{opt:bandwidth}) in terms of the distance
$d_n = x_n-x_{n-1}$ ($n = 1,2,\ldots,N$) of nodes:
\begin{align}
\underset{q \in \mathbb{R},\, \bm{d} \in \mathbb{R}^N}{\mathrm{maximize}}
\;\;
q
\quad
\mathrm{s.t.}
\;\;
&
R(d_i)
-
\frac{qd_i}{2}
-
\sum_{n=i+1}^N qd_n \geq 0,
\;\;
i \in \{1,2,\ldots,N\},
\nonumber
\\
& 
q \geq 0, \;\;
\sum_{i=1}^N d_i = L, 
\;\;
d_i \geq 0,
\;\;
i \in \{1,2,\ldots,N\}.
\tag{U}
\label{opt:original}
\end{align}

It is readily verified that (\ref{opt:original}) does not have a
convex feasible region: for fixed $q \geq 0$, each inequality
constraint takes the form that a convex function is not less than $0$
(cf.\ Assumption \ref{assumption:R-convex}), 
so that its feasible region is the complement of a convex set.
Such constraints are known as \textit{reverse-convex} constraints,
and a class of optimization problems with this type of constraints is 
called the reverse-convex programming (RCP) \cite{Hillestad80}. In
general, an optimization problem
\begin{equation}
\underset{\bm{y} \in \mathbb{R}^K}{\mathrm{maximize}}\; u(\bm{y})
\quad
\mathrm{s.t.}\;
f_i(\bm{y}) \geq 0,
\;\;
i = 1,2,\ldots,M,
\tag{R}
\label{opt:RCP}
\end{equation}
with $K$ variables and $M$ constraints ($M \geq K$) is called RCP if
$u$ and $f_i$ ($i = 1,2,\ldots,M$) are quasi-convex.
Note here that any equality constraints of the form $\sum_{k=1}^K w_k y_k = c$ 
($c \in \mathbb{R}$, $w_k \in \mathbb{R}$) can be
translated into the double number of linear (thus reverse-convex)
inequality constraints $\sum_{k=1}^K w_k y_k \leq c$ and
$\sum_{k=1}^K w_k y_k \geq c$. 
Due to the non-convexity of its feasible region,
global optimization for RCP is not an easy task in general, and various
algorithms to find a globally optimal solution have been developed in the 
literature (see e.g., \cite{Hillestad80,Ueing72,Tuy87,Moshirvaziri11} and
references therein). Here we introduce a known theoretical property of RCP,
which will be used in our analysis. 
Let $\mathcal{A} := \{\bm{y};\, f_i(\bm{y}) \geq 0\,
(i=1,2,\ldots,M)\}$ denote the feasible region of (\ref{opt:RCP})
and let $\mathcal{I}(\bm{y}) := \{i \in \{1,2,\ldots,M\};\,
f_i(\bm{y})=0\}$ ($\bm{y} \in \mathcal{A}$).
\begin{definition}[\!\!{\cite[Def.\ 1]{Hillestad80}}]
$\bar{\bm{y}} \in \mathcal{A}$ is called a basic solution of
(\ref{opt:RCP}) if the matrix with row vectors 
$\{\nabla f_i(\bar{\bm{y}});\, i \in \mathcal{I}(\bar{\bm{y}})\}$ 
has rank $K$.
\end{definition}

\begin{lemma}[\!\!{\cite[Th.\ 9]{Hillestad80}}] \label{lemma:RCP-basic}
If $u$ and $f_i$ ($i = 1,2,\ldots,M$) are quasi-convex,
(\ref{opt:RCP}) has an optimal solution which is also basic.
\end{lemma}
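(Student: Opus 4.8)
The plan is to follow the standard vertex-type argument for reverse-convex programs, adapted to a quasi-convex objective, by showing that from any maximizer one can reach an optimal solution with a full-rank family of active gradients. The statement presupposes that a maximizer exists; in our setting this is guaranteed because the feasible region of (\ref{opt:original}) is compact, being the intersection of the simplex $\sum_i d_i = L$, $d_i \geq 0$ with the closed reverse-convex constraints. I would therefore start from the nonempty set $\mathcal{O}$ of optimal solutions of (\ref{opt:RCP}), and among all $\bm{y} \in \mathcal{O}$ select a point $\bar{\bm{y}}$ that maximizes the rank $r := \dim \mathrm{span}\{\nabla f_i(\bar{\bm{y}});\, i \in \mathcal{I}(\bar{\bm{y}})\}$ of its active-constraint gradients, and write $u(\bar{\bm{y}}) = u^*$ for the global optimal value. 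The goal is to prove $r = K$, i.e.\ that $\bar{\bm{y}}$ is basic.

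Arguing by contradiction, suppose $r < K$. Then the orthogonal complement $S^\perp$ of $S := \mathrm{span}\{\nabla f_i(\bar{\bm{y}});\, i \in \mathcal{I}(\bar{\bm{y}})\}$ has dimension $K - r \geq 1$, so I can pick a nonzero $\bm{v} \in S^\perp$ and consider the line $t \mapsto \bar{\bm{y}} + t\bm{v}$. Moving orthogonally to the active gradients keeps every currently active constraint first-order stationary, and I would push $t$ in each direction up to the first value at which a previously inactive constraint becomes binding; closedness and boundedness of $\mathcal{A}$ make these first-hitting values $t^+ \geq 0$ and $t^- \geq 0$ well defined with feasible endpoints. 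The decisive tool is quasi-convexity of $u$: on the segment $[\bar{\bm{y}} - t^- \bm{v},\, \bar{\bm{y}} + t^+ \bm{v}]$ the function $u$ attains its maximum at an endpoint (otherwise a sublevel set containing both endpoints but excluding an interior maximizer would fail to be convex), and since $\bar{\bm{y}}$ is an interior point with $u(\bar{\bm{y}}) = u^*$ equal to the global maximum, at least one endpoint also attains $u^*$. That endpoint is thus an optimal solution at which an additional constraint is binding; once one checks that this extra binding constraint raises the rank strictly above $r$, the maximality in the choice of $\bar{\bm{y}}$ is contradicted, forcing $r = K$.

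The step I expect to be genuinely delicate, and the heart of the proof, is guaranteeing that feasibility is preserved while moving along $\bm{v}$. Because $\bm{v} \in S^\perp$, each active constraint has vanishing directional derivative at $\bar{\bm{y}}$, so first-order reasoning alone does not certify that $f_i \geq 0$ is maintained: a quasi-convex restriction with a stationary point need not have a minimum there (a cubic-type profile is quasi-convex yet crosses zero), so an active constraint could dip negative on one side. Handling this requires exploiting the one-sided monotone structure of a quasi-convex function along a line and choosing the displacement within the $(K-r)$-dimensional subspace $S^\perp$ so that no active constraint is violated, reselecting $\bm{v}$ within $S^\perp$ if a direction is blocked on both sides. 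The companion subtlety is the rank increment itself: since gradients vary along the path, one must verify that the gradient of the newly binding constraint at the endpoint is independent of $S$ rather than merely adding a redundant active index. Once both are settled, finiteness of the constraint set $M$ ensures that iterating this construction terminates at a basic optimal solution.
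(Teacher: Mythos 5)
You are reconstructing a result that the paper does not prove at all: Lemma \ref{lemma:RCP-basic} is quoted verbatim from \cite[Th.~9]{Hillestad80} and used as a black box, so there is no in-paper argument to compare against. Your sketch does follow the shape of the classical vertex-type argument for reverse-convex programs (pick a rank-maximal optimal point, move orthogonally to the active gradients, use quasi-convexity of $u$ to transfer optimality to an endpoint of the resulting segment, gain an active constraint, contradict rank-maximality), which is the right skeleton.

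However, the two steps you flag as ``delicate'' are genuine gaps, not technicalities. First, feasibility along $\bm{v} \in S^\perp$. For a quasi-convex $f_i$ with $f_i(\bar{\bm{y}})=0$, the bad set $\{f_i<0\}$ is convex, so the line $t \mapsto \bar{\bm{y}}+t\bm{v}$ meets it in an interval lying entirely on one side of $t=0$; hence each active constraint blocks at most one of the directions $\pm\bm{v}$, but with several active constraints one may block $t>0$ and another $t<0$. Take $f_1(\bm{y})=y_1^3$ and $f_2(\bm{y})=-y_1^3$: both are quasi-convex (their sublevel sets are half-spaces), both are active at the origin with vanishing gradient, so $S^\perp=\mathbb{R}^K$, yet any $\bm{v}$ with $v_1\neq 0$ is infeasible in both directions. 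Orthogonality to the active gradients carries no feasibility information here --- the first-order quasi-convexity inequality $f_i(\bm{y})<f_i(\bar{\bm{y}}) \Rightarrow \nabla f_i(\bar{\bm{y}})^{\top}(\bm{y}-\bar{\bm{y}})\leq 0$ is consistent with $\nabla f_i(\bar{\bm{y}})^{\top}\bm{v}=0$ --- and your remedy ``reselect $\bm{v}$ within $S^\perp$'' is precisely the assertion that needs proof; no argument is offered that a workable direction exists. Second, the rank increment: $S$ is spanned by gradients evaluated at $\bar{\bm{y}}$, whereas at the endpoint the formerly active constraints need not remain active, and even where they do their gradients have moved, so ``one more constraint binds at the endpoint'' does not imply that the active-gradient rank there exceeds $r$, and the contradiction with the rank-maximal choice of $\bar{\bm{y}}$ is not established. (A smaller point: existence of a maximizer is simply presupposed by the lemma for general (\ref{opt:RCP}); the compactness you invoke pertains only to the paper's instance (\ref{opt:original}).) As it stands, the proposal is a reasonable outline of the known strategy with its two hardest steps left open; deferring to \cite{Hillestad80}, as the paper does, remains the appropriate course unless those steps are closed.
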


\begin{remark} \label{remark:activate}
A basic solution must satisfy at least $K$ constraints with
equality. Lemma \ref{lemma:RCP-basic} thus implies that
there exists an optimal solution that satisfies
at least $K$ constraints with equality.
\end{remark}

In what follows, we develop a global optimization method
(\ref{opt:original}) utilizing its special structure. To that end, we
first introduce the following subproblem for each $q > 0$:
\begin{align}
\underset{\bm{d} \in \mathbb{R}^N}{\mathrm{maximize}}
\;\;
\sum_{n=1}^N d_n
\quad
\mathrm{s.t.}
\;\;
&
R(d_i)
-
\frac{qd_i}{2}
-
\sum_{n=i+1}^N qd_n \geq 0,
\;\;
i \in \{1,2,\ldots,N\},
\nonumber
\\
& 
d_i \geq 0,
\;\;
i \in \{1,2,\ldots,N\}.
\tag{S$_q$}
\label{opt:subproblem}
\end{align}
The main difference between (\ref{opt:original}) and
(\ref{opt:subproblem}) is that $q$ is not variable but fixed
in (\ref{opt:subproblem}). Also, the coverage $\sum_{i=1}^N d_i$
of relay nodes is to be maximized in (\ref{opt:subproblem}),
while it is fixed to be $L$ in (\ref{opt:original}).

It is readily verified that the subproblem (\ref{opt:subproblem})
still belongs to RCP. Owing to its special structure, however, a
globally optimal solution of (\ref{opt:subproblem}) is explicitly
obtained. For a fixed $q > 0$, we define a function 
$g_q: [0,\infty) \to (-\infty, R(0)/q]$ as
\begin{equation}
g_q(x) = \frac{R(x)}{q} - \frac{x}{2},
\quad
x \geq 0.
\label{eq:g_q-def}
\end{equation}
From Assumption \ref{assumption:R-convex}, $g_q$ is
a strictly decreasing continuous function, so that it has a
unique inverse function $g_q^{-1}: (-\infty, R(0)/q] \to [0,\infty)$. 
The following results show that an optimal solution of 
(\ref{opt:subproblem}) is explicitly constructed in terms of
$g_q$ and $g_q^{-1}$:

\begin{theorem} \label{theorem:optimal_solution_Sq}
\begin{itemize}
\item[(i)] If $g_q^{-1}(0) \geq g_q(0)$, then 
the following $\bm{d}^*$ is an optimal solution of (\ref{opt:subproblem}):
\begin{equation}
\bm{d}^* = (0, 0, \ldots, 0, g_q^{-1}(0))^{\top} \in \mathbb{R}^k.
\label{eq:d^*0}
\end{equation}

\item[(ii)] If $g_q^{-1}(0) < g_q(0)$, a backward recursion
\begin{equation}
d_N^* = g_q^{-1}(0), 
\quad
d_i^* = g_q^{-1}\left(\sum_{n=i+1}^N d_n^*\right),
\;\;
i = 1,2,\ldots,N-1
\label{eq:d-recursion}
\end{equation}
well-defines $d_1^*, d_2^*, \ldots, d_N^*$ such that
\begin{equation}
0 < d_i^* < d_{i+1}^*,
\quad
i = 1,2,\ldots,N-1,
\end{equation}
and the following $\bm{d}^*$ is an optimal solution of (\ref{opt:subproblem}):
\begin{equation}
\bm{d}^* = (d_1^*, d_2^*, \ldots, d_N^*)^{\top} \in \mathbb{R}^N.
\label{eq:d^*}
\end{equation}
\end{itemize}
\end{theorem}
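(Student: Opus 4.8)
The plan is to collapse the $N$ reverse-convex constraints into iterates of a single scalar map and then establish optimality by a backward dynamic-programming argument. First I would divide the $i$th constraint of (\ref{opt:subproblem}) by $q>0$ and write it, via the function $g_q$ in (\ref{eq:g_q-def}), as $g_q(d_i)\ge\sum_{n=i+1}^N d_n$. Since $g_q$ is strictly decreasing (Assumption~\ref{assumption:R-convex}) with $g_q(0)=R(0)/q$, this is equivalent to $d_i\le g_q^{-1}\bigl(\sum_{n=i+1}^N d_n\bigr)$, and any feasible point automatically satisfies $\sum_{n=i+1}^N d_n\le g_q(0)$ for every $i$. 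Writing $S_i:=\sum_{n=i}^N d_n$, constraint $i$ then reads $d_i\le g_q^{-1}(S_{i+1})$ and hence $S_i\le S_{i+1}+g_q^{-1}(S_{i+1})=:\phi(S_{i+1})$, so the task is to maximize $S_1$ through the scalar map $\phi(s)=s+g_q^{-1}(s)$ on $[0,g_q(0)]$. I would set $x^\circ:=g_q^{-1}(0)$ and record the companion function $h(x):=R(x)/q+x/2$, which is convex by Assumption~\ref{assumption:R-convex}, satisfies $h(0)=g_q(0)$, and obeys $\phi(g_q(x))=h(x)$.

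Next I would settle the well-definedness and ordering in part (ii). The key is that $\phi$ maps $[0,g_q(0))$ into itself. Indeed $\phi(s)>s$ whenever $g_q^{-1}(s)>0$, while $\phi(s)<g_q(0)$ is equivalent to $h(x)<h(0)$ for $x=g_q^{-1}(s)\in(0,x^\circ]$; since $h$ is convex with $h(0)=g_q(0)$ and, under the hypothesis $x^\circ<g_q(0)$ of part (ii), also $h(x^\circ)<g_q(0)$ (this last inequality reduces to $R(x^\circ)<R(0)/2$, which follows from $R(x^\circ)=qx^\circ/2$ together with $qx^\circ<R(0)$), the chord bound for convex functions forces $h<g_q(0)$ on $(0,x^\circ]$. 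Starting from $S_{N+1}=0$ and $d_N^*=x^\circ<g_q(0)$, the recursion (\ref{eq:d-recursion}) therefore keeps every $S_{i+1}^*$ strictly inside $[0,g_q(0))$, so each $d_i^*=g_q^{-1}(S_{i+1}^*)$ is well-defined and strictly positive; the strict ordering $d_i^*<d_{i+1}^*$ then follows from $S_{i+1}^*>S_{i+2}^*$ and the strict monotonicity of $g_q^{-1}$.

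For optimality I would induct backward on $F_j:=\max\{\sum_{i=j}^N d_i\}$ over the feasibility of constraints $i\ge j$, aiming at $F_j=S_j^*$. The crucial structural fact is that $\phi$ is \emph{valley-shaped} on $[0,g_q(0)]$: since $\phi'(s)>0\Leftrightarrow R'(g_q^{-1}(s))<-q/2$ and $R'$ is nondecreasing by convexity, this sign condition holds on an upper sub-interval of $s$, so $\phi$ first decreases and then increases, and its maximum over any subinterval is attained at an endpoint. Given the inductive hypothesis $F_{j+1}=S_{j+1}^*$, every feasible tail has $S_{j+1}\le F_{j+1}$, whence $S_j\le\phi(S_{j+1})\le\max\{\phi(0),\phi(F_{j+1})\}=\max\{x^\circ,S_j^*\}=S_j^*$, the last equality holding because $S_j^*\ge d_N^*=x^\circ$; conversely $\bm{d}^*$ attains $S_j^*$ as it makes all constraints $i\ge j$ tight. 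Thus $F_j=S_j^*$, and in particular $F_1=\sum_{n=1}^N d_n^*$, proving that $\bm{d}^*$ in (\ref{eq:d^*}) is optimal. Part (i), where $x^\circ\ge g_q(0)$, follows from the same endpoint analysis: feasibility with $N\ge2$ forces $d_N\le g_q(0)$, and on $[0,g_q(0)]$ the valley-shaped $\phi$ has $\max\{\phi(0),\phi(g_q(0))\}=\max\{x^\circ,g_q(0)\}=x^\circ$, realized by concentrating the whole length at the last node as in (\ref{eq:d^*0}).

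The step I expect to be the main obstacle is precisely this optimality argument. The naive route of propagating $S_i\le S_i^*$ through $\phi$ fails because $\phi$ is \emph{not} monotone in general; one checks, for instance, that $R(d)=e^{-d}$ gives $R'(x^\circ)>-q/2$, so $\phi(S_{i+1})\le\phi(S_{i+1}^*)$ cannot be asserted directly. The remedy, which is the technical heart, is to replace monotonicity by the valley shape of $\phi$ combined with the endpoint comparison $\phi(g_q^{-1}(0))=x^\circ<S_j^*$; verifying the $\phi$-invariance of $[0,g_q(0))$ through the convexity chord bound is the other delicate point. Finally I would note that Lemma~\ref{lemma:RCP-basic} is consistent with the conclusion, since $\bm{d}^*$ is a basic solution: the Jacobian of its active constraints is upper triangular with diagonal entries $R'(d_i^*)/q-1/2<0$, hence of full rank $N$.
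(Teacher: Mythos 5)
Your proposal is correct, but it reaches the result by a genuinely different route than the paper. The paper's proof (Appendix~\ref{appendix:optimal_solution_Sq}) leans on reverse-convex programming machinery: it invokes Lemma~\ref{lemma:RCP-basic} to guarantee an optimal \emph{basic} solution, counts the zeros $\phi^{(k)}$ of optimal solutions, reduces $(\mbox{P}^{(k)})$ to $(\mbox{P}^{(k-\phi^{(k)})})$, and then shows via the full-rank triangular Jacobian that a positive basic solution must make all capacity constraints tight, which forces the recursion; the dichotomy between your parts (i) and (ii) is settled by a fixed-point analysis of $h(z)=z+g^{-1}(z)$ split into the two cases $(g^{-1})'(0)>-1$ and $(g^{-1})'(0)\le -1$ (introducing the threshold $\beta^*$). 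You instead run a direct backward dynamic-programming induction on the maximal tail sum $F_j$, and you neutralize the non-monotonicity of $\phi(s)=s+g_q^{-1}(s)$ --- which you correctly identify as the crux, and which is exactly the paper's Case~2 --- by observing that $\phi$ is quasi-convex (indeed convex, since $g_q^{-1}$ is convex), so its maximum over $[0,F_{j+1}]$ sits at an endpoint, with $\phi(0)=g_q^{-1}(0)\le S_j^*=\phi(S_{j+1}^*)$ closing the step; your chord-bound argument showing $\phi$ maps $[0,g_q(0))$ into itself replaces the paper's two-case analysis with a single convexity inequality. Your proof is more elementary and self-contained (the RCP theorem appears only as an a posteriori consistency check), and it unifies the case split; the paper's heavier route is not wasted, however, because the monotonicity of $h$ on $[\beta^*,g(0)]$ established in its case analysis is reused later in the proofs of Theorem~\ref{theorem:subproblem}(a) and Theorem~\ref{theorem:exponential-decay}, which your argument would not supply for free. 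Minor presentational points: your $h(x)=R(x)/q+x/2$ clashes with the paper's use of $h$ for your $\phi$, and part (i) deserves one explicit line noting that constraint $i-1$ forces $S_i\le g_q(0)$ for every $i\ge 2$ so that the endpoint bound $S_1\le\phi(S_2)\le\max\{\phi(0),\phi(g_q(0))\}=g_q^{-1}(0)$ applies; neither affects correctness.
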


\begin{remark}\label{remark:theorem:optimal_solution_Sq}
The proof of Theorem \ref{theorem:optimal_solution_Sq} is somewhat
complicated because a careful treatment is necessary to differentiate
between the two cases (i) and (ii).  Basically, our proof is based on
the fact mentioned in Remark \ref{remark:activate} that there exists
an optimal solution satisfying at least $N$ constraints with equality. 
Recall that (\ref{opt:subproblem}) has $2N$ constraints: $N$ out of
these are of the form $g_q(d_i) \geq \sum_{n=i+1}^N d_n$ and the
others are of the form $d_i \geq 0$. The solution (\ref{eq:d^*})
satisfies the former with equality for $i=1,2,\ldots,N$ and it has
all non-zero elements. On the other hand, the solution (\ref{eq:d^*0})
has only one non-zero element, and it satisfies $g_q(d_i) > \sum_{n=i+1}^N
d_n$ for $i=2,3,\ldots,N$.
\end{remark}
\noindent
We provide the proof of Theorem \ref{theorem:optimal_solution_Sq}
in Appendix \ref{appendix:optimal_solution_Sq}. 

The optimal solution given in Theorem \ref{theorem:optimal_solution_Sq} 
takes a different form depending on whether or not $g_q^{-1}(0) \geq
g_q(0)$ holds. While it is easy to check if this inequality
holds for given $q$, we also have a simple criterion shown in the
following lemma, which is useful in the theoretical analysis below:
\begin{lemma}
\label{lemma:g_q-changepoint-q}

Let $q_0$ denote the unique solution of 
\begin{equation}
R\left(\frac{R(0)}{q}\right) - \frac{R(0)}{2} = 0,
\quad
q > 0.
\label{eq:R-equation}
\end{equation}
We then have
\begin{equation}
g_q^{-1}(0) \geq g_q(0),
\;\;
\forall q \geq q_0,
\qquad
g_q^{-1}(0) < g_q(0),
\;\;
\forall q < q_0.
\end{equation}
\end{lemma}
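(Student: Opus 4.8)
The plan is to reduce the two-sided comparison $g_q^{-1}(0)$ versus $g_q(0)$ to a sign condition on a single scalar function of $q$, and then to exploit monotonicity in $q$. First I would record the two quantities explicitly: by definition (\ref{eq:g_q-def}) we have $g_q(0) = R(0)/q$, while $g_q^{-1}(0)$ is the unique $x \geq 0$ solving $R(x) = qx/2$. Since $R$ is strictly decreasing (Assumption \ref{assumption:R-convex}), $g_q$ is strictly decreasing on $[0,\infty)$, and hence so is its inverse. Applying the order-reversing map $g_q$ to both sides of $g_q^{-1}(0) \geq g_q(0)$ --- both of whose sides are nonnegative and thus lie in the domain $[0,\infty)$ of $g_q$, so that the composition is well-defined --- yields the equivalence
\[
g_q^{-1}(0) \geq g_q(0) \iff g_q\bigl(g_q(0)\bigr) \geq 0.
\]

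Next I would compute the right-hand side. Substituting $g_q(0) = R(0)/q$ into (\ref{eq:g_q-def}) gives
\[
g_q\bigl(g_q(0)\bigr) = \frac{1}{q}\left( R\!\left(\frac{R(0)}{q}\right) - \frac{R(0)}{2}\right),
\]
so that, since $q > 0$, the sign of $g_q(g_q(0))$ coincides with the sign of
\[
h(q) := R\!\left(\frac{R(0)}{q}\right) - \frac{R(0)}{2}.
\]
The zero set of $h$ is precisely the solution set of (\ref{eq:R-equation}), so it remains to show that $h$ has a unique zero $q_0 > 0$, that $h(q) \geq 0$ for $q \geq q_0$, and that $h(q) < 0$ for $q < q_0$.

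This I would do purely by monotonicity and the boundary behaviour of $R$. As $q$ increases, the argument $R(0)/q$ decreases, and since $R$ is strictly decreasing, $R(R(0)/q)$ strictly increases; hence $h$ is strictly increasing on $(0,\infty)$, which gives uniqueness of any zero. For existence I would examine the two limits: as $q \to \infty$ we have $R(0)/q \to 0$ and $R$ is continuous, so $h(q) \to R(0) - R(0)/2 = R(0)/2 > 0$ (note $R(0) > 0$ since $R$ strictly decreases to $0$); as $q \to 0^+$ we have $R(0)/q \to \infty$, so by $\lim_{d\to\infty} R(d) = 0$ we get $h(q) \to -R(0)/2 < 0$. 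Continuity of $h$ together with the intermediate value theorem then produces a zero, unique by strict monotonicity; calling it $q_0$ and reading off the signs of the strictly increasing $h$ gives $h(q) \geq 0 \iff q \geq q_0$, which combined with the equivalence above is exactly the claim.

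The steps are all elementary; the only point requiring care is the first reduction, where the order-reversing nature of $g_q$ must be tracked correctly so that $g_q^{-1}(0) \geq g_q(0)$ translates into $g_q(g_q(0)) \geq 0$ (and not the reverse), and one must confirm that $g_q(0) = R(0)/q$ indeed lies in the domain $[0,\infty)$ of $g_q$ so that $g_q(g_q(0))$ is well-defined. I expect this bookkeeping, rather than any genuine difficulty, to be the main thing to get right.
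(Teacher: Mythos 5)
Your proof is correct and follows essentially the same route as the paper: reduce the comparison of $g_q^{-1}(0)$ and $g_q(0)$ to the sign of $g_q(g_q(0)) = \tfrac{1}{q}\bigl(R(R(0)/q) - R(0)/2\bigr)$, and then use the strict monotonicity and the two boundary limits of this quantity in $q$ to locate the unique zero $q_0$ and read off the sign pattern. In fact your careful tracking of the order reversal yields the correct equivalence $g_q^{-1}(0) \geq g_q(0) \Leftrightarrow g_q(g_q(0)) \geq 0$, whereas the paper's displayed intermediate step (\ref{eq:g-condition-R}) writes this with ``$\leq 0$,'' a sign typo that your version silently corrects while reaching the same (correct) conclusion.
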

\begin{proof}
Because $g_q(x)$ is strictly decreasing with respect to $x$,
\begin{align}
g_q^{-1}(0) \geq g_q(0)
\ \Leftrightarrow\ 
g_q(g_q(0)) \leq 0
\ \Leftrightarrow\
R\left(\frac{R(0)}{q}\right) - \frac{R(0)}{2} \leq 0,
\label{eq:g-condition-R}
\end{align}
where we used (\ref{eq:g_q-def}) and $q > 0$ in the last equality.
From Assumption \ref{assumption:R-convex}, $R(R(0)/q) - R(0)/2$
is continuous and strictly increasing in $q$ and
\begin{equation}
\lim_{q \to 0+} 
\left[
R\left(\frac{R(0)}{q}\right) - \frac{R(0)}{2} 
\right]
= 
- \frac{R(0)}{2} < 0,
\quad
\lim_{q \to \infty} 
\left[
R\left(\frac{R(0)}{q}\right) - \frac{R(0)}{2} 
\right]
= 
\frac{R(0)}{2} > 0,
\end{equation}
so that (\ref{eq:R-equation}) has the unique solution.
Lemma \ref{lemma:g_q-changepoint-q} now follows immediately
from (\ref{eq:g-condition-R}).
\end{proof}

We then relate the subproblem (\ref{opt:subproblem}) with the original
problem (\ref{opt:original}). For $q > 0$, let $\bm{d}_q^* := (d_{q,1}^*,
d_{q,2}^*,\ldots,d_{q,N}^*)^{\top}$ denote the optimal solution of
(\ref{opt:subproblem}) given in Theorem \ref{theorem:optimal_solution_Sq}, 
and let $x_{q,N}^* := \sum_{i=1}^N d_{q,i}^*$
denote the corresponding optimal value. The following theorem shows that we can
obtain a globally optimal solution of (\ref{opt:original}) by
iteratively solving (\ref{opt:subproblem}):

\begin{theorem} \label{theorem:subproblem}
\begin{itemize}
\item[(a)] The optimal value $x_{q,N}^*$ of (\ref{opt:subproblem}) is
a continuous, strictly decreasing function of $q$ with
$\lim_{q \to 0+} x_{q,N}^* = \infty$ and $\lim_{q \to \infty}
x_{q,N}^* = 0$.
\item[(b)] The optimal value $q_{\sup}^*$ of (\ref{opt:original}) is
characterized as follows:
\begin{equation}
x_{q,N}^* > L \ \Leftrightarrow\ q < q_{\sup}^*,
\quad\;
x_{q,N}^* < L \ \Leftrightarrow\ q > q_{\sup}^*,
\quad\;
x_{q,N}^* = L \ \Leftrightarrow\ q = q_{\sup}^*. 
\label{eq:subproblem}
\end{equation}
Furthermore, $(q_{\sup}^*, \bm{d}_{q_{\sup}^*}^*)$ is an optimal
solution of (\ref{opt:original}).
\end{itemize}
\end{theorem}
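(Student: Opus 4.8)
The plan is to establish part (a) first, since the qualitative properties of the map $q \mapsto x_{q,N}^*$ (monotonicity, continuity, and the two boundary limits) are exactly what drive part (b). The engine behind (b) will be an \emph{intermediate-value} argument: once $x_{q,N}^*$ is shown to decrease continuously and strictly from $\infty$ to $0$, there is a unique $q^\dagger$ with $x_{q^\dagger,N}^*=L$, and I will identify $q^\dagger$ with $q_{\sup}^*$ by proving that (\ref{opt:original}) is feasible at a given level $q$ if and only if $x_{q,N}^*\geq L$.

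For part (a), the cleanest route to monotonicity is to note that the feasible region of (\ref{opt:subproblem}) shrinks as $q$ grows: each constraint reads $R(d_i)\geq q\bigl(d_i/2+\sum_{n>i}d_n\bigr)$ with a nonnegative bracket, so any $\bm{d}$ feasible for some $q_2$ is feasible for every $q_1<q_2$; hence $x_{q,N}^*$ is nonincreasing. To upgrade this to \emph{strict} decrease, I would take $q_1<q_2$, view the optimizer $\bm{d}_{q_2}^*$ inside the feasible region of $(\mathrm{S}_{q_1})$, and use that its last coordinate $d_{q_2,N}^*=g_{q_2}^{-1}(0)>0$ makes the $N$th constraint active at $q_2$. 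Under $q_1$, every constraint then has strictly positive slack at $\bm{d}_{q_2}^*$ (the ones active at $q_2$ because the bracket $d_i/2+\sum_{n>i}d_n$ is strictly positive there, containing $d_{q_2,N}^*>0$; the others by the nesting), so a small increase of the last coordinate stays feasible and strictly raises $\sum_n d_n$, giving $x_{q_1,N}^*>x_{q_2,N}^*$. The limits follow from the sandwich $g_q^{-1}(0)\leq x_{q,N}^*\leq N\,g_q^{-1}(0)$: the lower bound holds since $d_{q,N}^*=g_q^{-1}(0)$ in both cases of Theorem \ref{theorem:optimal_solution_Sq}, and the upper bound since every feasible $\bm{d}$ satisfies $g_q(d_i)\geq 0$, i.e.\ $d_i\leq g_q^{-1}(0)$; reading $g_q(x)=R(x)/q-x/2$ shows $g_q^{-1}(0)\to\infty$ as $q\to 0+$ and $g_q^{-1}(0)\to 0$ as $q\to\infty$. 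For continuity I would use the explicit form of Theorem \ref{theorem:optimal_solution_Sq}: on each regime $q\geq q_0$ and $q<q_0$ (Lemma \ref{lemma:g_q-changepoint-q}), $x_{q,N}^*$ is a finite composition of the jointly continuous maps $(q,s)\mapsto g_q^{-1}(s)$, and at the junction $q_0$ the recursion (\ref{eq:d-recursion}) collapses, since $g_{q_0}^{-1}(g_{q_0}^{-1}(0))=g_{q_0}^{-1}(g_{q_0}(0))=0$, so the two expressions agree.

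For part (b), the heart is the equivalence that (\ref{opt:original}) admits a feasible point at level $q$ (some $\bm{d}\geq\bm{0}$ with $\sum_i d_i=L$ meeting all constraints) if and only if $x_{q,N}^*\geq L$. The ``only if'' direction is immediate, as such a $\bm{d}$ is feasible for (\ref{opt:subproblem}) with objective $L$. For ``if'', given $x_{q,N}^*\geq L$ I would scale the optimizer $\bm{d}_q^*$ by $t:=L/x_{q,N}^*\in(0,1]$; this preserves feasibility because $R(td_i)\geq R(d_i)\geq q\bigl(d_i/2+\sum_{n>i}d_n\bigr)\geq tq\bigl(d_i/2+\sum_{n>i}d_n\bigr)$ by monotonicity of $R$ and $t\leq 1$, and it makes the total length exactly $L$. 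Combining this with part (a), the set of feasible levels is $\{q>0:\ x_{q,N}^*\geq L\}=(0,q^\dagger]$, so $q_{\sup}^*=q^\dagger$ is attained and the three equivalences in (\ref{eq:subproblem}) follow from strict monotonicity. Finally, at $q=q_{\sup}^*$ we have $x_{q_{\sup}^*,N}^*=L$, so $\bm{d}_{q_{\sup}^*}^*$ already has total length $L$ and satisfies every constraint of (\ref{opt:original}); hence $(q_{\sup}^*,\bm{d}_{q_{\sup}^*}^*)$ is feasible with objective $q_{\sup}^*$ and therefore optimal.

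The step I expect to be most delicate is the \emph{strict} monotonicity in part (a), together with continuity at the regime boundary $q_0$. Because (\ref{opt:subproblem}) is a reverse-convex (hence nonconvex) program, I cannot invoke the usual parametric stability or envelope results for convex value functions; the argument must instead exploit the explicit optimizer of Theorem \ref{theorem:optimal_solution_Sq} and the precise way active constraints acquire slack as $q$ is lowered. The remaining ingredients—the feasibility equivalence and the intermediate-value conclusion of part (b)—are then routine once part (a) is in hand.
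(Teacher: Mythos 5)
Your proposal is correct, and its overall architecture matches the paper's: first establish monotonicity, continuity, and the two limits of $q \mapsto x_{q,N}^*$, then reduce part (b) to an intermediate-value statement via a scaling argument. Two of your sub-arguments, however, take a genuinely different route. For strict monotonicity, the paper works through the explicit recursion: it writes $x_{q,N}^* = s_{q,N}$ with $s_{q,i} = h_q(s_{q,i-1})$ and $h_q(s) = s + g_q^{-1}(s)$, proves $s_{q,i} > s_{q',i}$ for $q < q'$ by induction using that $h_q$ is decreasing in $q$ and increasing in $s$ on the relevant range, and handles $q \geq q_0$ separately via $x_{q,N}^* = g_q^{-1}(0)$. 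Your perturbation argument --- every constraint of (\ref{opt:subproblem}) at level $q_1 < q_2$ has strictly positive slack at $\bm{d}_{q_2}^*$ because each bracket $d_i/2 + \sum_{n>i} d_n$ contains $d_{q_2,N}^* = g_{q_2}^{-1}(0) > 0$, so the last coordinate can be pushed up while keeping feasibility --- is more elementary, avoids the case split at $q_0$ altogether, and relies only on the single structural fact $d_{q,N}^* = g_q^{-1}(0)$ from Theorem \ref{theorem:optimal_solution_Sq}; your sandwich $g_q^{-1}(0) \leq x_{q,N}^* \leq N g_q^{-1}(0)$ for the limits is a similar mild streamlining. For part (b), the paper scales $q$ and $\bm{d}$ jointly ($\hat{q} = q x_{q,N}^*/L$, $\hat{d}_i = d_{q,i}^* L/x_{q,N}^*$) to exhibit a feasible point of (\ref{opt:original}) at a strictly larger level whenever $x_{q,N}^* > L$, whereas you keep $q$ fixed and rescale only $\bm{d}$, which yields the cleaner intermediate characterization that level $q$ is feasible if and only if $x_{q,N}^* \geq L$; both rest on the same inequality $R(td_i) \geq R(d_i)$ for $t \leq 1$ and reach the same conclusion. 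The only point where your sketch is lighter than the paper is the joint continuity of $(q,s) \mapsto g_q^{-1}(s)$ along the recursion (one must confirm the arguments remain in the domain $(-\infty, g_q(0)]$ near $q_0$, which the appendix verifies), but this is routine.
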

\noindent
The proof of Theorem \ref{theorem:subproblem} is provided in Appendix
\ref{appendix:subproblem}.

An important consequence of Theorem \ref{theorem:subproblem} is that a
globally optimal solution of (\ref{opt:original}) is obtained by
iteratively solving the subproblem (\ref{opt:subproblem}) with
Theorem \ref{theorem:optimal_solution_Sq}.
Theorem \ref{theorem:subproblem} (b) indicates that we can judge if a
given value of $q > 0$ is smaller, equal to, or greater than the
optimal value $q_{\sup}^*$ of the original problem
(\ref{opt:original}), by comparing the optimal value $x_{q,N}^*$ of
the subproblem (\ref{opt:subproblem}) with the area length $L$.
It also indicates that an optimal placement $\bm{d}^*$ for 
(\ref{opt:original}) is equal to that for the subproblem
(\ref{opt:subproblem}) with $q=q_{\sup}^*$,
which is explicitly calculated from Theorem
\ref{theorem:optimal_solution_Sq} once $q_{\sup}^*$ is obtained.
Theorem \ref{theorem:subproblem} (a) ensures (i) the existence of $q$
such that $x_{q,N}^* = L$ (equivalently $q=q_{\sup}^*$), and (ii) the
monotonicity of $x_{q,N}^*$ with respect to $q$. 
Therefore, a standard bisection method enables us to numerically find
the value of $q_{\sup}^*$, so that we can effectively compute the
optimal placement $\bm{d}^*$ for the original problem
(\ref{opt:original}); Algorithm \ref{alg:algorithm} summarizes
such a procedure.

\begin{algorithm}[t]
{\small
\caption{A global optimization algorithm for (\ref{opt:original}), 
in which the optimal solution $\bm{d}_q^*$ and the optimal value $x_{q,N}^*$ of
(\ref{opt:subproblem}) is computed with Theorem \ref{theorem:optimal_solution_Sq}.}
\label{alg:algorithm}
\begin{algorithmic}[1]
\renewcommand{\algorithmicrequire}{\textbf{Input:}}
\renewcommand{\algorithmicensure}{\textbf{Output:}}

\Require Number of nodes $N$, area length $L$, 
transmission rate function $R(d)$ ($d \geq 0$), and output precision $\epsilon$.

\Ensure An optimal solution $(q_{\sup}^*, \bm{d}^*)$ of (\ref{opt:original}).
\State Find $q_{\mathrm{low}}$ and $q_{\mathrm{up}}$ satisfying
$0 < q_{\mathrm{low}} < q_{\mathrm{up}}$, 
$x_{q_{\mathrm{low}},N}^* \geq L$, and $x_{q_{\mathrm{up}},N}^* < L$.
\While{$q_{\mathrm{up}}-q_{\mathrm{low}} \geq \epsilon$}
\State $q \gets (q_{\mathrm{low}}+q_{\mathrm{up}})/2$.
\State{\textbf{if} $x_{q,N}^* \geq L$ \textbf{then} $q_{\mathrm{low}} \gets q$
\,
\textbf{else} $q_{\mathrm{up}} \gets q$}.
\EndWhile
\State $q_{\sup}^* \gets (q_{\mathrm{low}}+q_{\mathrm{up}})/2$
and $\bm{d}^* \gets \bm{d}_{q_{\sup}}^*$.
\end{algorithmic}
}
\end{algorithm}

Before closing this section, we conduct further investigations
on mathematical structures of the obtained optimal solution.
Let $q_{\sup}^*(L,N)$ ($L > 0$, $N=1,2,\ldots$) denote the optimal
value $q_{\sup}^*$ of the normalized throughput limit, represented as a
function of the area length $L$ and the number of relay nodes $N$ for
a fixed transmission rate function $R$.

\begin{lemma}
\label{lemma:q_sup-decreasing}
For a fixed $N$ ($N = 1,2,\ldots$), $q_{\sup}^*(L,N)$ is a strictly
decreasing function of $L$.
\end{lemma}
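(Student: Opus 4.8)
The plan is to exploit the implicit characterization of $q_{\sup}^*$ furnished by Theorem~\ref{theorem:subproblem}. The crucial observation is that, for a fixed number of nodes $N$ and a fixed transmission-rate function $R$, the subproblem (\ref{opt:subproblem}) and therefore its optimal value $x_{q,N}^*$ depend on $q$ alone and are entirely independent of $L$; the area length $L$ enters only through the original problem (\ref{opt:original}). I would accordingly regard $\phi(q) := x_{q,N}^*$ as a function of the single variable $q > 0$, and reduce the lemma to a statement about the inverse of $\phi$.

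First I would invoke Theorem~\ref{theorem:subproblem}(a), which states that $\phi$ is continuous, strictly decreasing, and satisfies $\lim_{q \to 0+}\phi(q) = \infty$ and $\lim_{q \to \infty}\phi(q) = 0$. By the intermediate value theorem together with strict monotonicity, $\phi$ is therefore a continuous, strictly decreasing bijection from $(0,\infty)$ onto $(0,\infty)$, and consequently admits a continuous, strictly decreasing inverse $\phi^{-1}:(0,\infty) \to (0,\infty)$.

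Next I would appeal to Theorem~\ref{theorem:subproblem}(b), whose equivalence $x_{q,N}^* = L \Leftrightarrow q = q_{\sup}^*$ identifies $q_{\sup}^*(L,N)$ as the unique solution $q$ of $\phi(q) = L$; in other words $q_{\sup}^*(L,N) = \phi^{-1}(L)$. Since the inverse of a strictly decreasing function is itself strictly decreasing, the desired monotonicity follows at once. Concretely, for any $0 < L_1 < L_2$, writing $q_j := q_{\sup}^*(L_j,N)$ we have $\phi(q_1) = L_1 < L_2 = \phi(q_2)$, and the strict decrease of $\phi$ forces $q_1 > q_2$, that is, $q_{\sup}^*(L_1,N) > q_{\sup}^*(L_2,N)$.

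The argument is short because Theorem~\ref{theorem:subproblem} already carries the analytical weight. I do not expect any genuine obstacle beyond combining the two ingredients correctly: one must note that $\phi$ does not depend on $L$, so that inverting the relation $\phi(q) = L$ in $L$ is meaningful, and one must use that part~(b) pins $q_{\sup}^*$ down as the exact preimage of $L$ under $\phi$ rather than merely bounding it. The strict monotonicity of the inverse then delivers the claim with no further computation.
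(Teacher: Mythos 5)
Your proof is correct and follows essentially the same route as the paper's: both identify $q_{\sup}^*(L,N)$ as the unique solution of $x_{q,N}^*=L$ via Theorem~\ref{theorem:subproblem}(b) and then invoke the strict monotonicity and surjectivity of $q \mapsto x_{q,N}^*$ from part~(a) to conclude that the inverse map $L \mapsto q_{\sup}^*(L,N)$ is strictly decreasing. Your version merely spells out the bijection and inverse-function step that the paper leaves implicit.
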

\begin{proof}
Lemma \ref{lemma:q_sup-decreasing} immediately follows from
Theorem \ref{theorem:subproblem} (a) and (b): $q_{\sup}^*(N,L)$ for
fixed $N$ is determined by the unique solution of $x_{q,N}^* = L,\, q > 0$,
which is strictly decreasing with $L$.
\end{proof}

\begin{theorem}
\label{theorem:g_q-changepoint-L}
Let $L_0 := g_{q_0}^{-1}(0)$, where $q_0$ is defined in Lemma
\ref{lemma:g_q-changepoint-q}. For any $N = 1,2,\ldots$, we have
\begin{align}
g_{q_{\sup}^*(L,N)}^{-1}(0) \geq g_{q_{\sup}^*(L,N)}(0),
\;\;
\forall L \leq L_0,
\quad
g_{q_{\sup}^*(L,N)}^{-1}(0) < g_{q_{\sup}^*(L,N)}(0),
\;\;
\forall L > L_0,
\end{align}
i.e., if $L \leq L_0$, an optimal node placement for (\ref{opt:original})
is given by the form (\ref{eq:d^*0}), and otherwise it is given by the
form (\ref{eq:d^*}), regardless of the number of nodes $N$.
\end{theorem}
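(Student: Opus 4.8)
The plan is to reduce the two claimed inequalities to a single threshold condition on $q_{\sup}^*(L,N)$, and then to pin that threshold down to $L=L_0$ by one explicit computation.

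First I would apply Lemma \ref{lemma:g_q-changepoint-q} with $q = q_{\sup}^*(L,N)$. That lemma states precisely that $g_q^{-1}(0) \ge g_q(0)$ holds if and only if $q \ge q_0$ (and the strict reversed inequality if and only if $q < q_0$). Hence the two assertions of the theorem are equivalent, respectively, to $q_{\sup}^*(L,N) \ge q_0$ for $L \le L_0$, and to $q_{\sup}^*(L,N) < q_0$ for $L > L_0$. So it remains to show that $q_{\sup}^*(L,N) \ge q_0 \Leftrightarrow L \le L_0$ (together with the strict version), for every $N$.

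The key step---and the only one requiring real work---is to evaluate the optimal coverage $x_{q_0,N}^*$ of the subproblem (\ref{opt:subproblem}) at $q = q_0$. By the definition of $q_0$ in (\ref{eq:R-equation}) together with the chain of equivalences established in the proof of Lemma \ref{lemma:g_q-changepoint-q}, the value $q_0$ is exactly the point at which $g_{q_0}^{-1}(0) = g_{q_0}(0)$; that is, $q_0$ sits on the boundary between cases (i) and (ii) of Theorem \ref{theorem:optimal_solution_Sq}. Since $g_{q_0}^{-1}(0) \ge g_{q_0}(0)$ holds (with equality), case (i) applies, so the optimal solution is of the form (\ref{eq:d^*0}) with all but the last coordinate equal to zero, and its coverage is $x_{q_0,N}^* = \sum_{n=1}^N d_n^* = g_{q_0}^{-1}(0) = L_0$. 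Crucially, this value is independent of $N$.

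Finally I would combine this computation with the monotonicity results. By Theorem \ref{theorem:subproblem}(b), $x_{q,N}^* = L$ if and only if $q = q_{\sup}^*(L,N)$; applying this with $q = q_0$ and $L = L_0$ gives $q_{\sup}^*(L_0,N) = q_0$ for every $N$. Lemma \ref{lemma:q_sup-decreasing} asserts that $q_{\sup}^*(\cdot,N)$ is strictly decreasing in $L$, so $L \le L_0$ if and only if $q_{\sup}^*(L,N) \ge q_{\sup}^*(L_0,N) = q_0$, with the strict inequalities matching as well. Together with the reduction of the first paragraph, this establishes the theorem. The main obstacle is the boundary computation $x_{q_0,N}^* = L_0$: once one recognizes that $q_0$ triggers case (i) of Theorem \ref{theorem:optimal_solution_Sq}, the coverage collapses to $g_{q_0}^{-1}(0) = L_0$ uniformly in $N$, and the rest is bookkeeping with the two monotonicity lemmas.
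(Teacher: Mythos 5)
Your proposal is correct and follows essentially the same route as the paper, which likewise derives the result by noting $q_{\sup}^*(L,N)=q_0 \Leftrightarrow L = x_{q_0,N}^* = g_{q_0}^{-1}(0)$ (via Theorem \ref{theorem:optimal_solution_Sq}~(i) at the boundary case) and then combining Lemma \ref{lemma:g_q-changepoint-q}, Theorem \ref{theorem:subproblem}, and Lemma \ref{lemma:q_sup-decreasing}. You have simply spelled out in more detail the chain of equivalences that the paper compresses into a single sentence.
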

\begin{proof}
We readily obtain Theorem \ref{theorem:g_q-changepoint-L}
from Theorem \ref{theorem:optimal_solution_Sq} (i),
Lemma \ref{lemma:g_q-changepoint-q}, Theorem \ref{theorem:subproblem},
and Lemma \ref{lemma:q_sup-decreasing}, noting that
$q_{\sup}^*(L,N) = q_0
\;\Leftrightarrow\;
L = x_{q_0, N}^* = g_{q_0}^{-1}(0)$.
\end{proof}

Theorem \ref{theorem:g_q-changepoint-L} shows that if the area length
$L$ is smaller than or equal to $L_0$, the normalized throughput limit
$q_{\sup}^*(L,N)$ is not affected by the number of nodes $N$, i.e., 
\textit{no performance gain can be obtained by increasing the number
of relay nodes} in that case. On the other hand, if $L \geq L_0$,
we can verify from Theorem \ref{theorem:subproblem} (a) and 
Theorem \ref{theorem:optimal_solution_Sq} (ii) that $q_{\sup}^*(L,N)$
strictly increases with the number of nodes $N$.

Finally, we provide a further characterization of the sequence $d_1^*, d_2^*,
\ldots, d_N^*$ defined by (\ref{eq:d-recursion}),
restricting our attention to the case $g_q^{-1}(0) > g_q(0)$
(i.e., $L > L_0$ in view of Theorem \ref{theorem:g_q-changepoint-L}). 
As shown in Theorem \ref{theorem:optimal_solution_Sq},
it is optimal to place relay nodes with ascending node intervals in
this case. In other words, if one determines node intervals in the
reversed order (i.e., the interval between $N$th and $(N-1)$st nodes
is determined first), the sequence of optimal node intervals $d_N^*,
d_{N-1}^*, \ldots, d_1^*$ is decreasing. The following theorem shows that in the
optimal placement, the decrease in node intervals is at least exponentially fast:
\begin{theorem}\label{theorem:exponential-decay}

Let $\gamma$ denote a real number
given by
\begin{equation}
\gamma = 1 + (g_q^{-1})'(g(0)) = 1 + \frac{1}{g_q'(0)},
\end{equation}
where $f'$ denotes the derivative of function $f$.
If $g_q^{-1}(0) > g_q(0)$, then $0 < \gamma < 1$ and 
\begin{itemize}
\item[(i)] if $(g_q^{-1})'(0) > -1$, we have
$d_i^* \leq \gamma^{N-i}d_N^*$ ($i = 1,2,\ldots,N-1$),
and
\item[(ii)] if $(g^{-1})'(0) \leq -1$, we have
$
d_i^* \leq \gamma^{N-i-1}d_{N-1}^*,
$ ($i = 1,2,\ldots,N-2$).
\end{itemize}
\end{theorem}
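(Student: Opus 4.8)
The plan is to reduce the recursion (\ref{eq:d-recursion}) to a scalar contraction and exploit the convexity of $g_q^{-1}$ inherited from Assumption \ref{assumption:R-convex}. Write $\phi := g_q^{-1}$ and $\beta := g_q(0) = R(0)/q$; since (\ref{eq:d-recursion}) is exactly the regime of Theorem \ref{theorem:optimal_solution_Sq}(ii), we have $d_N^* = \phi(0) < \beta$ and $0 < d_i^* < d_{i+1}^*$. Because $g_q$ is convex and strictly decreasing (Assumption \ref{assumption:R-convex} and (\ref{eq:g_q-def})), its inverse $\phi$ is convex and strictly decreasing on $(-\infty,\beta]$ with $\phi(\beta) = 0$. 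Setting $s_i := \sum_{n=i+1}^N d_n^*$ (so that $d_i^* = \phi(s_i)$ and $s_N = 0$) and measuring from $\beta$ via the gap $u_i := \beta - s_i$, I would introduce $\Phi(u) := \phi(\beta - u)$; then $\Phi$ is convex and strictly increasing with $\Phi(0) = 0$, the recursion becomes $u_{i-1} = u_i - \Phi(u_i)$, and $d_i^* = \Phi(u_i)$ with $u_N = \beta$.

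First I would pin down $0 < \gamma < 1$. By convexity of $g_q$, its derivative at $0$ is at most the secant slope over $[0,d_N^*]$, i.e.\ $g_q'(0) \le (g_q(d_N^*) - g_q(0))/d_N^* = -\beta/d_N^*$; since $d_N^* < \beta$ this gives $g_q'(0) < -1$, whence $\gamma = 1 + 1/g_q'(0) \in (0,1)$. The same reparametrization identifies the slope of $\Phi$ at the origin: $\Phi'(0) = -\phi'(\beta) = -(g_q^{-1})'(g_q(0)) = 1 - \gamma$.

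The core is a geometric contraction of the gaps. Since $\Phi$ is convex with $\Phi(0)=0$, the difference quotient $\Phi(u)/u$ is nondecreasing in $u$ and bounded below by $\Phi'(0) = 1-\gamma$, so $\Phi(u) \ge (1-\gamma)u$; substituting into the recursion yields $u_{i-1} = u_i - \Phi(u_i) \le \gamma u_i$, and iterating gives $u_i \le \gamma^{N-i} u_N$. To transfer this back to the intervals I would use convexity through the origin again, in the form $\Phi(\lambda u) \le \lambda \Phi(u)$ for $\lambda \in [0,1]$: for $i \le N-1$, taking $\lambda = \gamma^{N-i} \le \gamma < 1$ and using monotonicity of $\Phi$, one gets $d_i^* = \Phi(u_i) \le \Phi(\gamma^{N-i} u_N) \le \gamma^{N-i}\Phi(u_N) = \gamma^{N-i} d_N^*$, which is (i). Claim (ii) follows by re-running the identical contraction from the base index $N-1$: the gaps satisfy $u_i \le \gamma^{(N-1)-i} u_{N-1}$, and for $i \le N-2$ (so that $\gamma^{N-1-i} < 1$) the same step gives $d_i^* = \Phi(u_i) \le \gamma^{N-i-1}\Phi(u_{N-1}) = \gamma^{N-i-1} d_{N-1}^*$.

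I expect the delicate point to be the very first contraction step, which is also what motivates the two cases. Writing each factor as $d_i^*/d_{i+1}^* = 1 + (g_q^{-1})'(\xi_i)$ with $\xi_i \in (s_{i+1}, s_i)$ supplied by the mean value theorem, the uniform bound $1 + (g_q^{-1})'(\xi_i) \le \gamma$ rests on the monotonicity of $(g_q^{-1})'$ (equivalently the convexity of $\phi$) together with $\xi_i < \beta$. Anchoring the chain at $d_N^*$ as in (i) implicitly needs the first factor near the left end to be nonnegative, which is guaranteed precisely when $(g_q^{-1})'(0) > -1$; when $(g_q^{-1})'(0) \le -1$ the step $d_N^* \to d_{N-1}^*$ cannot be absorbed cleanly and one anchors at $d_{N-1}^*$ instead, producing the shifted exponent in (ii). The structural fact I would verify with care is therefore the monotonicity of $(g_q^{-1})'$ and the identification $\Phi'(0) = 1-\gamma$; the remainder is bookkeeping on the geometric sums.
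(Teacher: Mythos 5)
Your proof is correct, but it takes a genuinely different route from the paper's. The paper (Appendix C) works with the partial sums $z_i^* = \sum_{j\le i} y_j^*$ and the map $h(z)=z+g^{-1}(z)$, and bounds \emph{consecutive increments} via a Lipschitz estimate $|h(t_2)-h(t_1)|\le\gamma|t_2-t_1|$; that estimate requires $h$ to be monotone on the relevant range, i.e.\ $(g^{-1})'\ge -1$ there, which is exactly what forces the case split and the introduction of the threshold $\beta^*=\max\{\beta:(g^{-1})'(\beta)=-1\}$ in the paper's Case 2. You instead track the \emph{distance to the fixed point}, $u_i=g_q(0)-\sum_{n>i}d_n^*$, show it contracts by $\gamma$ using only the supporting-line inequality $\Phi(u)\ge\Phi'(0)u$ for the convex function $\Phi(u)=g_q^{-1}(g_q(0)-u)$, and transfer back to the intervals via $\Phi(\lambda u)\le\lambda\Phi(u)$. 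Every step checks out: $\Phi$ is convex and increasing with $\Phi(0)=0$ and $\Phi'(0)=1-\gamma$, the recursion is indeed $u_{i-1}=u_i-\Phi(u_i)$, and $\Phi(u_N)=d_N^*$. A pleasant consequence you seem not to have fully noticed (your closing paragraph about the ``first factor near the left end'' suggests otherwise) is that your contraction $u_{i-1}\le\gamma u_i$ and the transfer step never use the sign of $(g_q^{-1})'(0)+1$, so your argument establishes the bound of part (i), $d_i^*\le\gamma^{N-i}d_N^*$, \emph{unconditionally}; the case hypothesis is only needed for the paper's increment-based argument. Two minor points: the theorem's hypothesis should read $g_q^{-1}(0)<g_q(0)$ (the regime of Theorem \ref{theorem:optimal_solution_Sq}(ii), as the paper's own appendix assumes), which you silently and correctly adopt; and your derivation of $g_q'(0)<-1$ from the tangent/secant comparison coincides with the paper's (\ref{eq:g'0<minus1}).
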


\noindent
The proof of Theorem \ref{theorem:exponential-decay}
is provided in Appendix \ref{appendix:exponential-decay}.

\section{Performance Evaluation and Extension}
\label{sec:numerical}

In this section, we evaluate the performance of the obtained optimal
solution. We first present extensive numerical experiments to
illustrate the effectiveness of the optimal solution, focusing on the
one-dimensional SOWN discussed so far. We then provide an example of
an optimal relay placement problem for a \emph{two-dimensional} SOWN,
which demonstrates how our result can be extended to a more general
situation.

Throughout this section, we employ $R(d)$ given in (\ref{eq:R-Shannon})
and the following SNR equation \cite{Giles05}:
\begin{equation}
\mathrm{SNR}(d)
= 
\frac{P_{\mathrm{t}}D^2 \cos{\varphi}}{4(\tan^2 \theta)P_{\mathrm{n}}}
\cdot
\frac{e^{-Kr}}{(\epsilon+r)^2},
\label{eq:SNR-for-eval}
\end{equation}
where $P_{\mathrm{t}}$ denotes the transmitter power,
$P_{\mathrm{n}}$ denotes the noise power,
$D$ denotes the receiver aperture diameter,
$\varphi$ denotes the angle between the optical axis of the receiver 
and the line-of-sight between the transmitter and the receiver,
$\theta$ denotes the half-angle transmitter beamwidth,
$K$ denotes the beam attenuation coefficient,
and $\epsilon$ denotes the constant introduced in (\ref{eq:SNR}).
For the noise power $P_{\mathrm{n}}$, we employ a constant value
representing thermal noise, assuming that the contribution of shot
noise to $P_{\mathrm{n}}$ is negligible due to the small power
of received optical signals. 

Unless otherwise mentioned, we use parameter values summarized in
Table \ref{table:parameters} as the default values. 
We consider three different values for the beam attenuation
coefficient $K$, reflecting its dependence on the light
wavelength \cite{Smith} (we restrict our attention to the case of pure
water, based on empirical evidence \cite{Riccobene07} in the deep sea).
We also set the area length $L = 500$ [m] unless otherwise mentioned.

\begin{table}
\caption{Default parameter values in numerical experiments.}
\label{table:parameters}
\centering
\begin{tabular}{|c||c|c|}
\hline
Symbol & Unit & Value
\\
\hhline{|=#=|=|}
$P_{\mathrm{t}}$ & Watt & $0.5$
\\
\hline
$P_{\mathrm{n}}$ & Watt & $2\times 10^{-6}$
\\
\hline
$D$ & Meter & $0.2$
\\
\hline
$\varphi$ & Degree & $10$
\\
\hline
$\theta$ & Degree & $10$
\\
\hline
$W$ & Hz & $5 \times 10^8$
\\
\hline
\multirow{3}{*}{$K$} & \multirow{3}{*}{1/Meter} 
&
$\lefteqn{\mbox{Red light (650 [nm]):}}$\hspace{10.0em}$3 \times 10^{-1}$
\\
& & 
$\lefteqn{\mbox{Green light (550 [nm]):}}$\hspace{10.0em}$7\times 10^{-2}$
\\
& & 
$\lefteqn{\mbox{Blue light (450 [nm]):}}$\hspace{10.0em}$2\times 10^{-2}$
\\
\hline
$\epsilon$ & Meter & 1
\\
\hline
\end{tabular}
\end{table}

\subsection{Performance Evaluation of one-dimensional SOWCs}

We start with providing an example of the optimal, non-constant node
intervals we have obtained. Fig.\ \ref{fig:green_placement}
illustrates the optimal node placement for the case of green light 
and Fig.\ \ref{fig:green_distance} shows the corresponding sequence of
optimal node intervals $(d_i^*)_{i=1,2,\ldots,N}$ 
(i.e., Fig.\ \ref{fig:green_distance} plots the spacings of the
placement in Fig.\ \ref{fig:green_placement}). 
We observe that the optimal node interval $d_i^*$ is increasing with
$i$ and that there is a large difference between the values of $d_1^*$ and $d_N^*$.
Fig.\ \ref{fig:bitrate} shows the maximum normalized throughput limit
$q_{\sup}^*$ (achieved by the optimal relay placement) as a
function of the number $N$ of relay nodes for the three
wavelengths. We observe that adding a few relay nodes to the network
drastically expands the stability region of the system. 
Fig.\ \ref{fig:blue_L} shows the maximum normalized throughput
limit $q_{\sup}^*$ as a function of the area length $L$ for the 
blue light. For large values of $L$, we observe that $q_{\sup}^*$
exponentially decreases as $L$ increases. Furthermore, for relatively
small values of $L$, only little difference can be seen between the
values of  $q_{\sup}^*$ for the number of nodes $N=5$, $N=10$, and
$N=20$, and they coincide each other for quite small $L$, i.e., no
performance gain can be obtained by using additional relay nodes (cf.\
Theorem \ref{theorem:g_q-changepoint-L}).

\begin{figure*}[tbp]
\centering
\includegraphics[scale=1.0]{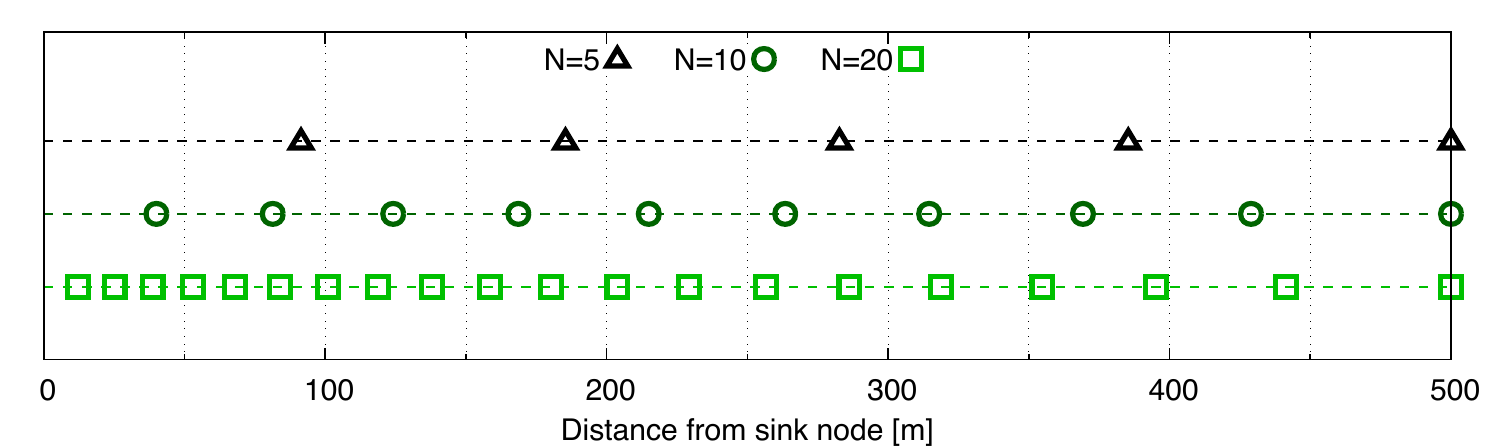}
\vspace{-2ex}
\caption{An illustration of the optimal placement of relay nodes for
the green light, where each symbol represents a relay node.}
\label{fig:green_placement}
\end{figure*}

\begin{figure}[tbp]
\begin{minipage}[t]{0.48\textwidth}
\centering
\includegraphics[scale=1.0]{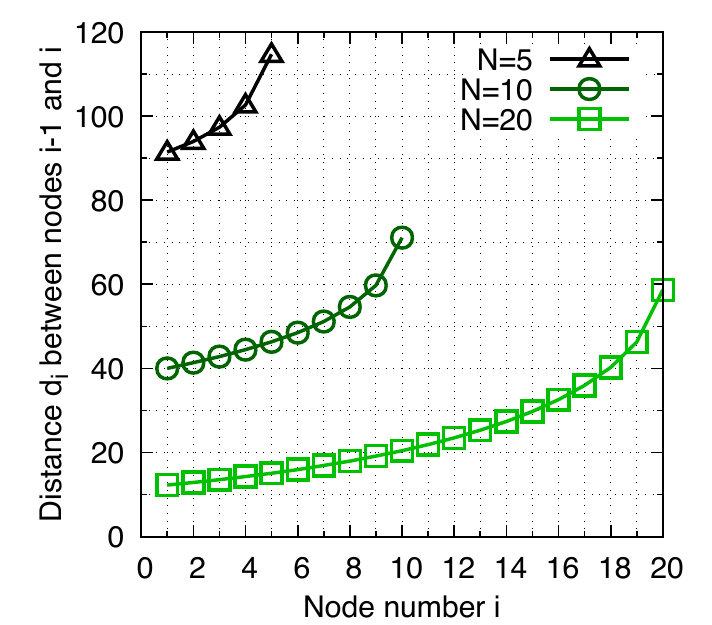}
\vspace{-2ex}
\caption{Optimal distances $\{d_i\}_{i=1,2,\ldots,N}$ between relay
nodes for the green light.}
\label{fig:green_distance}
\end{minipage}
\;\;
\begin{minipage}[t]{0.48\textwidth}
\includegraphics[scale=1.0]{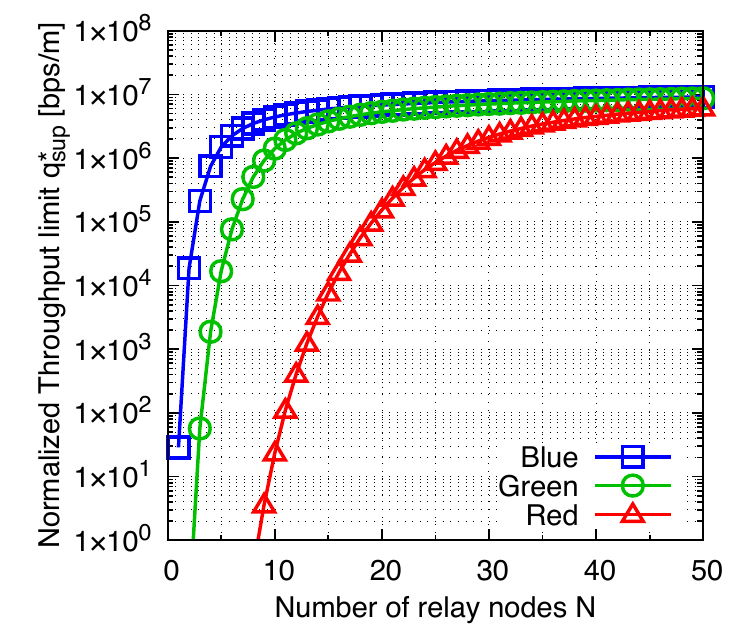}
\vspace{-2ex}
\caption{Impact of the number of relay nodes $N$ on the 
normalized throughput limit $q_{\sup}^*$.}
\label{fig:bitrate}
\end{minipage}
\end{figure}

\begin{figure}[p]
\centering
\begin{minipage}[t]{0.48\textwidth}
\includegraphics[scale=1.0]{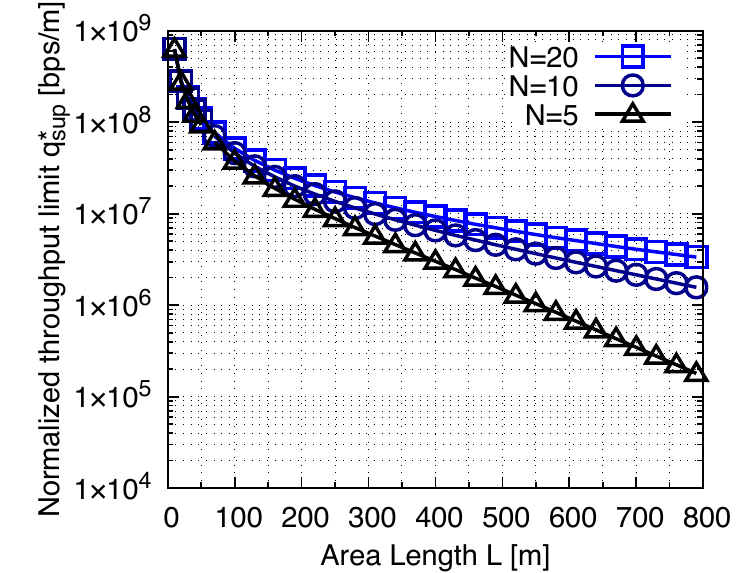}
\vspace{-2ex}
\caption{The normalized throughput limit $q_{\sup}^*$
as a function of coverage length $L$.}
\label{fig:blue_L}
\end{minipage}
\;\;
\begin{minipage}[t]{0.48\textwidth}
\includegraphics[scale=1.0]{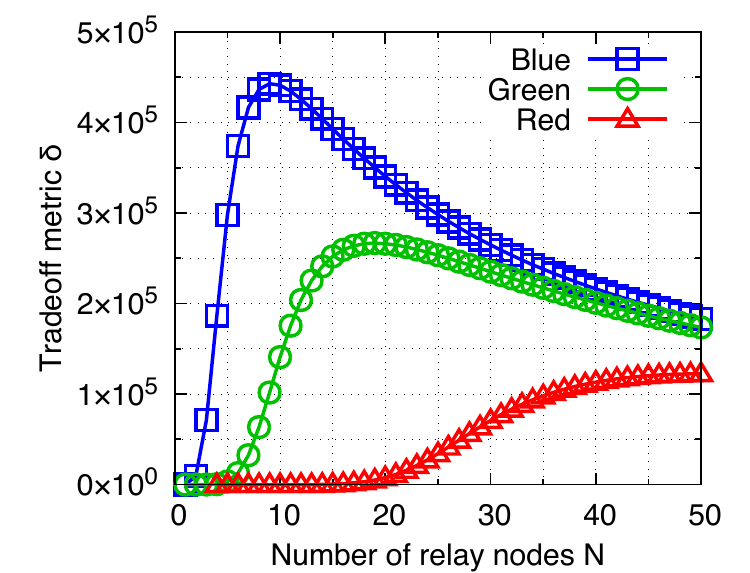}
\vspace{-2ex}
\caption{The tradeoff metric $\delta$ as a function of the
number $N$ of relay nodes.}
\label{fig:tradeoff}
\end{minipage}
\mbox{}
\vspace{0.5ex}
\\
\centering
\begin{minipage}[t]{0.48\textwidth}
\includegraphics[scale=1.0]{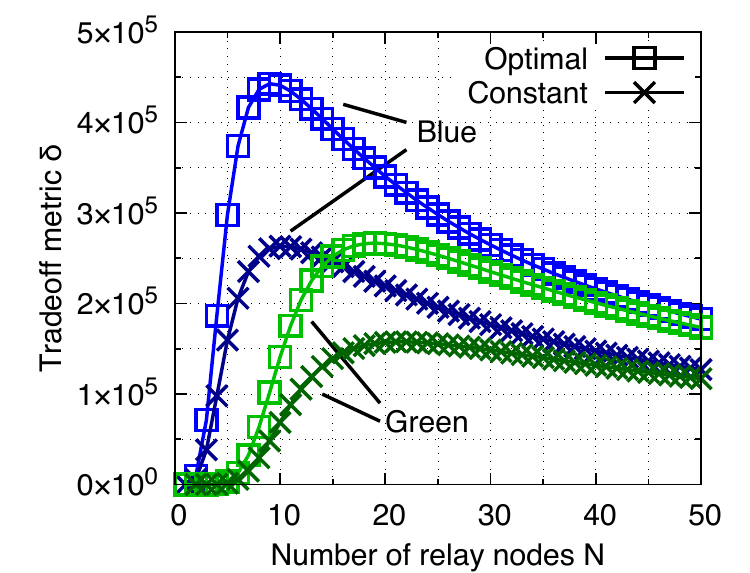}
\vspace{-2ex}
\caption{Performance improvement gained by the optimal solution,
compared to the node placement with constant intervals.}
\label{fig:comparison}
\end{minipage}
\;\;
\begin{minipage}[t]{0.48\textwidth}
\includegraphics[scale=1.0]{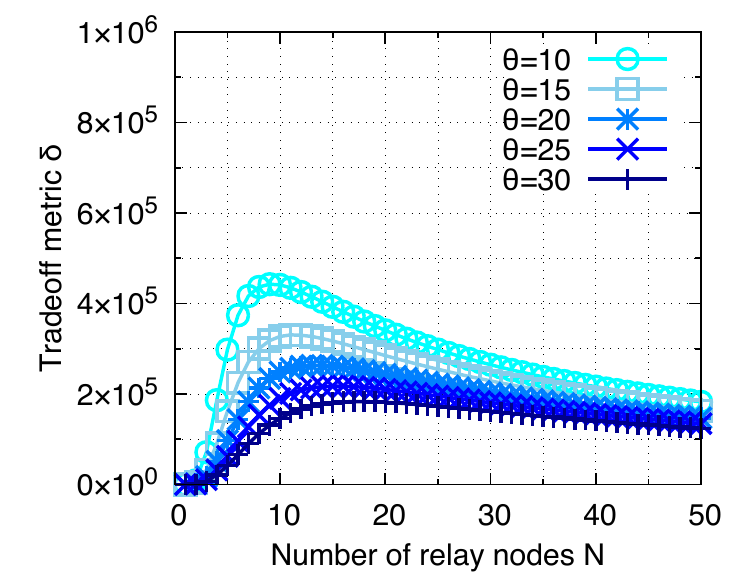}
\vspace{-2ex}
\caption{Effect of the beam width $\theta$ on the tradeoff metric $\delta$.}
\label{fig:divergence}
\end{minipage}
\mbox{}
\vspace{0.5ex}
\\
\begin{minipage}[t]{0.48\textwidth}
\includegraphics[scale=1.0]{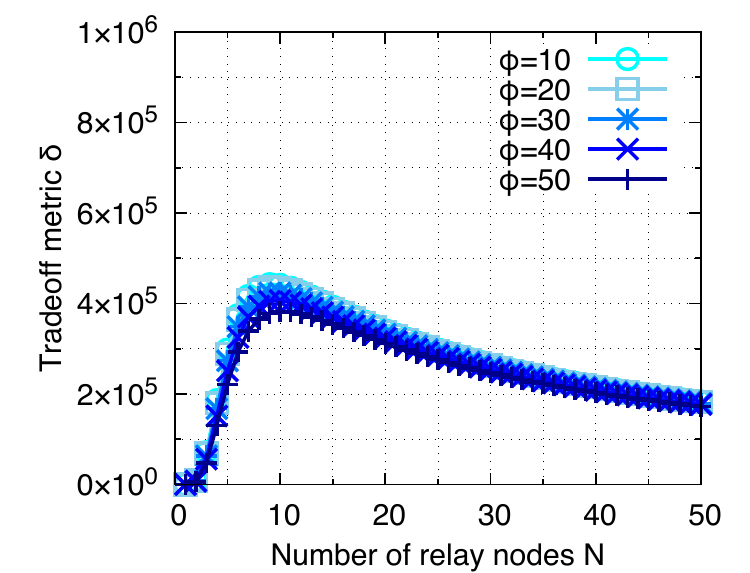}
\vspace{-2ex}
\caption{Effect of the misalignment $\phi$ on the tradeoff metric $\delta$.}
\label{fig:misalignment}
\end{minipage}
\;\;
\begin{minipage}[t]{0.48\textwidth}
\includegraphics[scale=1.0]{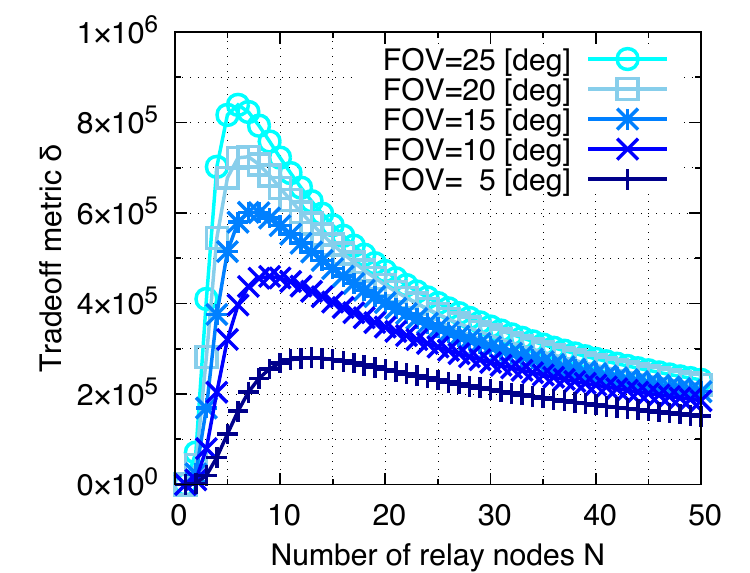}
\vspace{-2ex}
\caption{Effect of the receiver FOV on the tradeoff metric $\delta$.}
\label{fig:FOV}
\end{minipage}
\end{figure}

As shown in Fig.\ \ref{fig:bitrate}, $q_{\sup}^*$ is a concave
function of $N$: the impact of adding a relay node on improvement in
$q_{\sup}^*$ decreases with an increase in the number $N$ of relay nodes. 
The saturation of $q_{\sup}^*$ with an increase in $N$ is due to the
fact that link capacity is inherently bounded above by $R(0)$, so that 
$q_{\sup}^*$ cannot exceed $R(0)/L$ ($\simeq 1.13 \times 10^7$ in the
settings of Fig.\ \ref{fig:bitrate}).
Therefore, a reasonable number of relay nodes can be determined by taking
the cost-performance tradeoff into consideration.
To discuss the tradeoff between the number of nodes and the system
performance, we introduce \emph{a tradeoff metric} $\delta := q_{\sup}/N$.
By definition, $\delta$ represents the (normalized) throughput of the
system \emph{per} relay node. Therefore, the number of relay nodes
$N$ maximizing $\delta$ is optimal in the sense that it
maximizes the cost-performance ratio.
See Fig.\ \ref{fig:tradeoff}, where the tradeoff metric $\delta$ is
plotted as a function of $N$ for the three types of wavelengths.
We observe that the optimal number of relay nodes depends on the
light-wavelength and that the use of blue light is far more effective
than that of green and red lights.

Fig.\ \ref{fig:comparison} compares the optimal placement with the
constant-interval placement $d_i = L/N$ ($i= 1,2,\ldots,N$),
in terms of the tradeoff metric $\delta$; note here that 
between two different placements with the same $N$, the ratio of
$\delta$ is equal to that of the normalized throughput limit
$q_{\sup}$ itself. From Fig.\ \ref{fig:comparison} we observe that 
significant performance improvement is gained by using the optimized,
non-constant node intervals.

We next discuss the effect of several practical aspects of the UOWC
channel on the system performance. For brevity, we present the results
focusing on the case of blue light. Fig.\ \ref{fig:divergence},
Fig.\ \ref{fig:misalignment}, and Fig.\ \ref{fig:FOV} respectively
show the effects of the beam width $\theta$, the misalignment $\phi$,
and receivers' field-of-view (FOV) on the tradeoff metric $\delta$,
where we assume the focal length $F=0.6$ [m] (note that these figures
have different scales from Fig.\ \ref{fig:tradeoff} and Fig.\ \ref{fig:comparison}).
We observe that the beam width and receivers' FOV have a significant
impact on the system performance, while the misalignment $\phi$ has a
less impact on it. This result suggests that (i) improving the receiver
FOV is particularly of great importance in developing optical devices
for SOWNs and (ii) narrowing the beam width (as long as the
line-of-sight (LOS) link is maintained) can effectively increase the
system performance.

In general, underwater nodes may have uncertainty in their positioning
due to the localization error. Fig.\ \ref{fig:uncert} shows the effect
of such uncertainty on the system performance $\delta$, where 
the positions $x_2,x_3,\ldots,x_{N-1}$ of intermediate relay nodes are
perturbed by independent Gaussian noise with mean zero and standard
deviation $\sigma$. We observe that the optimal placement still
attains a good performance even with the localization error. Also, we
see that the optimal number of nodes in terms of the cost-performance
ratio is invariant regardless of $\delta$.

Finally, we compare the proposed SOWN with a conventional UOWN with
vertical relays. Suppose that the seafloor is covered by
$N_{\mathrm{L}}$ relay nodes each of which collects data packets from
an interval of length $L/N_{\mathrm{L}}$ and relays the packets to
a tandem network of $N_{\mathrm{V}}$ vertically aligned relays with
interval $V/N_{\mathrm{V}}$, where $V$ denotes the depth of the
seafloor from the surface of the sea. In this vertical network, 
the normalized throughput limit is given by 
$q_{\sup} = \max\{q > 0;\, (L/N_{\mathrm{L}})q 
\leq R(V/N_{\mathrm{V}})\} = N_{\mathrm{L}}R(V/N_{\mathrm{V}})/L$.
In Fig.\ \ref{fig:vertical}, $q_{\sup}$ of the vertical UOWN is
plotted as a function of the total number of relay nodes
$N = N_{\mathrm{L}}(N_{\mathrm{V}}+1)$ for a case with
$V=3000$ [m] and $N_{\mathrm{L}}=5$, where $q_{\sup}$ in our proposed
SOWN with $N=10$ is also plotted as a reference. As shown in the figure,
to achieve similar performance to the proposed SOWN only with $N=10$,
the vertical UOWN requires at least $N=150$ relay nodes for the blue
light and more than $N=200$ relay nodes for the green light,
which highlights the efficiency of the proposed scheme in collecting
data from deep sea.

\begin{figure}[tbp]
\centering
\begin{minipage}[t]{0.48\textwidth}
\centering
\includegraphics[scale=1.0]{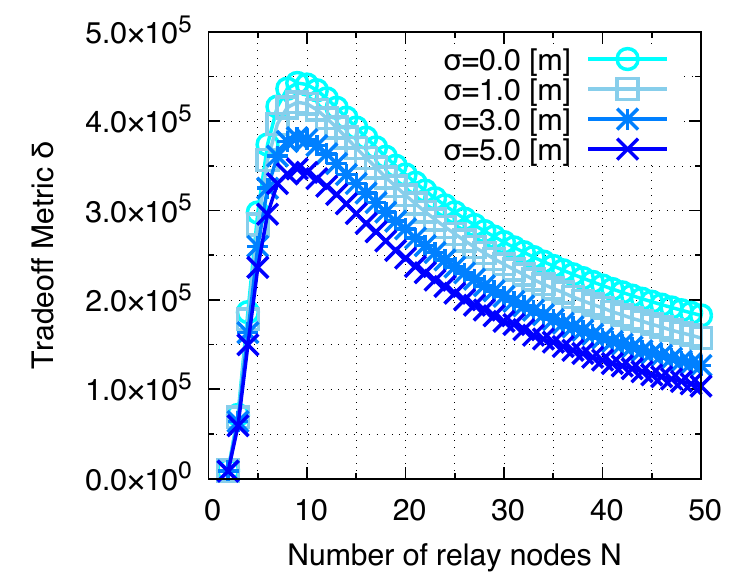}
\caption{Effect of localization uncertainty on the tradeoff metric
$\delta$.}
\label{fig:uncert}
\end{minipage}
\;\;
\begin{minipage}[t]{0.48\textwidth}
\centering
\includegraphics[scale=1.0]{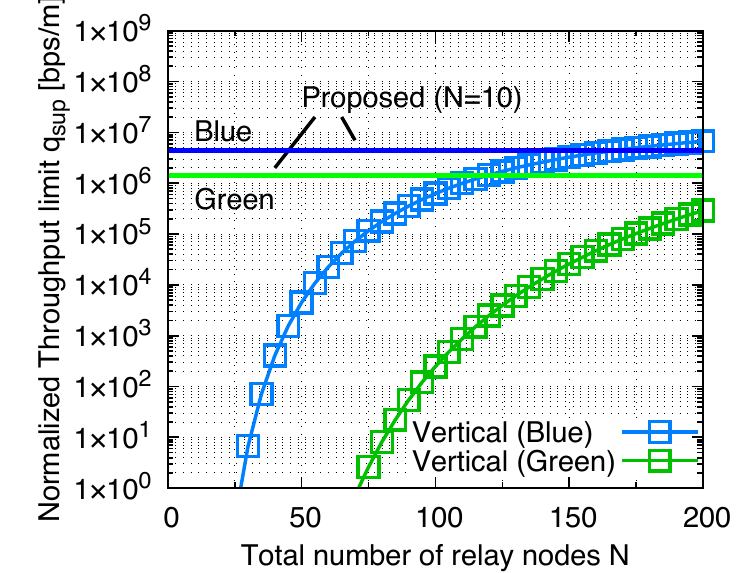}
\caption{Comparison of the proposed SOWN with a conventional vertical
UOWN ($N_L=5$).}
\label{fig:vertical}
\end{minipage}
\mbox{}
\\
\centering
\begin{minipage}[t]{0.48\textwidth}
\includegraphics[scale=0.4]{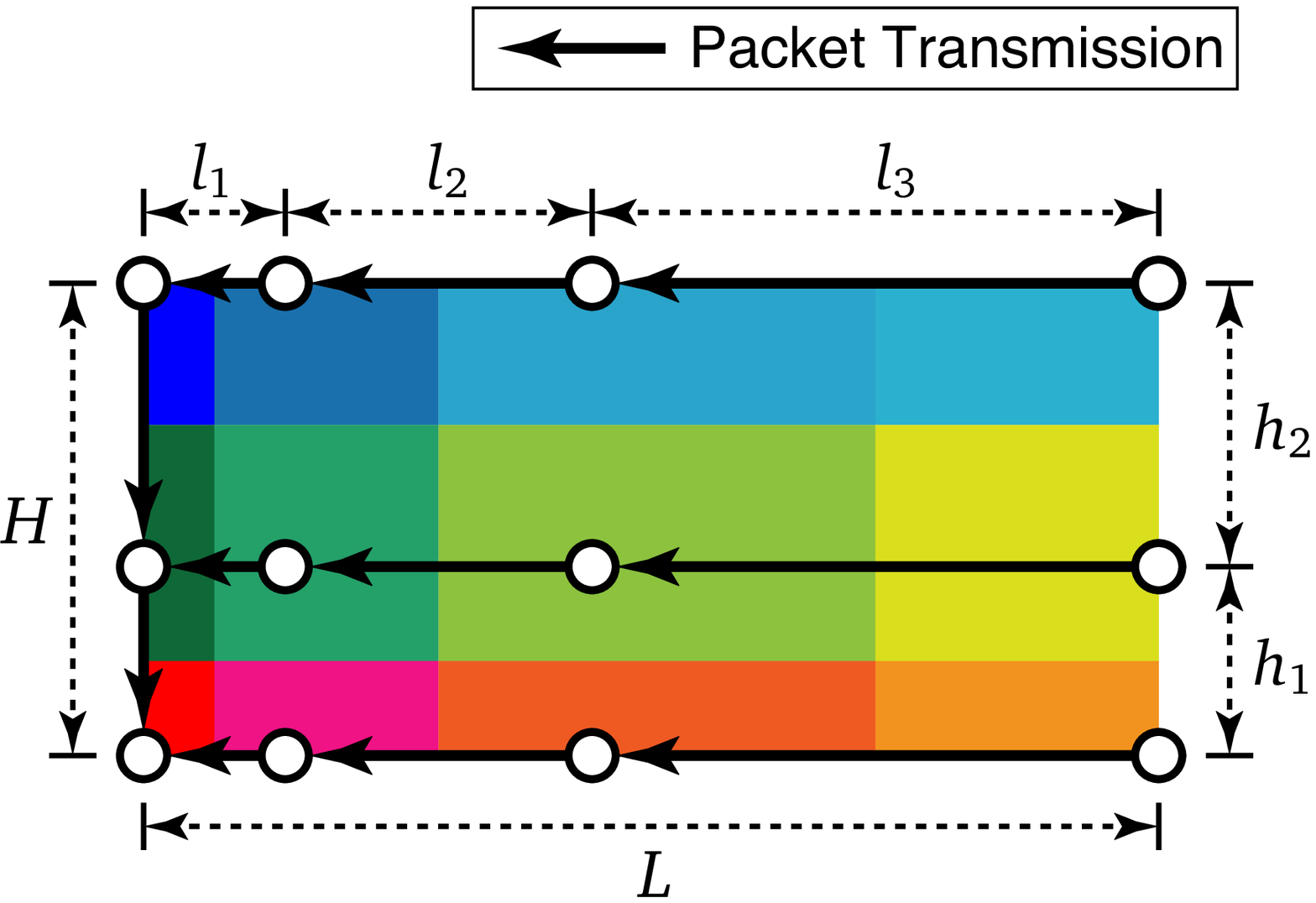}
\caption{A two-dimensional SOWN with $N_{\mathrm{L}} = 4$ and
$N_{\mathrm{H}} = 3$.}
\label{fig:2dim}
\end{minipage}
\;\;
\begin{minipage}[t]{0.48\textwidth}
\includegraphics[scale=1.0]{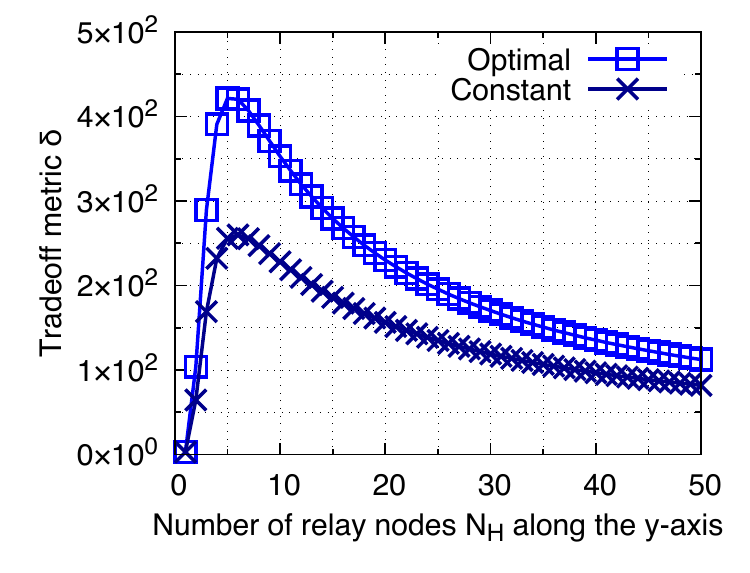}
\caption{Comparison of the optimal and constant relay intervals
in the two-dimensional case ($N_{\mathrm{L}} = 6$).}
\label{fig:2d-numeric}
\end{minipage}
\end{figure}

\subsection{Extension to a two-dimensional SOWN}

The optimization procedure we have developed can be extended to a
two-dimensional SOWN as follows. Suppose that a two-dimensional area
$\mathcal{L} \times \mathcal{H} \subset \mathbb{R}^2$ of a seafloor is
covered by $N$ relay nodes, where $\mathcal{L} = [0,L]$ as before and
$\mathcal{H} := [0,H]$ ($H > 0$). We assume that the sink node is
placed at the origin $(0,0)$ and relay nodes are to be placed in a
grid with non-constant spacings; the $i$th spacing along the $x$-axis
is denoted by $\ell_i$ and the $j$th spacing along the $y$-axis is
denoted by $h_j$. See Fig.\ \ref{fig:2dim} for an illustration.
Let $N_{\mathrm{L}}$ (resp.\ $N_{\mathrm{L}}$) denote the number
of nodes placed along the $x$-axis (resp.\ $y$-axis) for each row
(resp.\ column). Note that the total number of relay nodes in the
network (excluding the sink node) is given by 
$N = N_{\mathrm{L}}N_{\mathrm{H}} - 1$. 
Hereafter we refer to the node placed at the $i$th column ($i =
0,1,\ldots,N_{\mathrm{L}}-1$) and the $j$th row ($j =
0,1,\ldots,N_{\mathrm{H}}-1$) as the ($i,j$)th node, 
where the $(0,0)$th node denotes the sink node. By definition, the
position $(x_i, y_i)$ of the ($i,j$)th node is written as
$(x_i, y_i) = (\sum_{k=1}^i \ell_k, \sum_{k'=1}^j h_{k'})$. 
Similarly to the one-dimensional case, we assume that generation
times of data packets follow a general point process with intensity
$\lambda$, and the generation points of those packets are distributed
uniformly on $\mathcal{L} \times \mathcal{H}$.
The normalized traffic load $q$ is then defined as (cf.\
(\ref{eq:rho_n-def}))
\begin{equation}
q = \frac{\lambda B}{LH},
\end{equation}
where $B$ denotes the mean data size as before.
Each packet is collected by the node nearest from its generation
point and delivered to the sink node with multi-hop
transmission. The cover area $\mathcal{C}_{i,j}$ of the ($i,j$)th
node is then its two-dimensional Voronoi cell, which is a rectangle as
depicted in Fig.\ \ref{fig:2dim}. 
Also, the traffic intensity $\rho_{i,j}(q,\bm{x},\bm{y})$ of external
arrivals to the ($i,j$)th node is given by
$\rho_{i,j}(q,\bm{x},\bm{y}) = q|\mathcal{C}_{i,j}|$;
note here that cover areas $\mathcal{C}_{i,j}$ are, by definition,
determined by the node placement ($\bm{x},\bm{y}$).

To formulate an optimal relay placement problem in the two-dimensional
case, we have to specify the routing paths. Here we concentrate on the
basic routing policy shown in Fig.\ \ref{fig:2dim}: each packet is
first transmitted along the $x$-axis to the left-most node and then 
transmitted along the $y$-axis to the sink node.
In this setting, the one-dimensional optimal relay placement problem
(\ref{opt:original}) maximizing the normalized throughput limit
$q_{\sup}$ readily extends to the two-dimensional case:
\begin{align}
\lefteqn{
\hspace{-0.3em}
\underset{
q \in \mathbb{R},\, 
\bm{\ell} \in \mathbb{R}^{N_{\mathrm{L}}-1},\,
\bm{h} \in \mathbb{R}^{N_{\mathrm{H}}-1},\,
}{\mathrm{maximize}}
\;\;
q
}
\nonumber
\\
&
\hspace{2.0em}
\mathrm{s.t.}\hspace{-0.5em}
&&
R(\ell_i)
-
\max_{j \in \mathcal{N}_{\mathrm{H}}}\left(
\frac{h_j + h_{j+1}}{2}
\right)
\left(
\frac{q\ell_i}{2}
-
\sum_{n=i+1}^N q \ell_n 
\right)
\geq 0,
\quad
i \in \mathcal{N}_{\mathrm{L}},
\nonumber
\\
&&&
R(h_j)
-
L
\left(
\frac{qh_j}{2}
-
\sum_{n=j+1}^N q h_n 
\right)
\geq 0,
\;\;
j \in \mathcal{N}_{\mathrm{L}},
\nonumber
\\
&&& 
q \geq 0, \;\;
\sum_{i=1}^{N_{\mathrm{L}}} \ell_i = L, 
\;\;
\sum_{j=1}^{N_{\mathrm{H}}} h_j = H, 
\nonumber
\;\;
\ell_i \geq 0,
\,
i \in \mathcal{N}_{\mathrm{L}},
\;\;
h_j \geq 0,
\,
j \in \mathcal{N}_{\mathrm{H}},
\tag{$\mbox{U}_{\mathrm{2D}}$}
\label{opt:2dim}
\end{align}
where $\mathcal{N}_{\mathrm{L}} := \{1,2,\ldots,N_{\mathrm{L}}-1\}$, 
$\mathcal{N}_{\mathrm{H}} := \{1,2,\ldots,N_{\mathrm{H}}-1\}$,
and $h_{N_{\mathrm{H}}} := 0$.

Observe that for a fixed spacings $\bm{\ell}$ (resp.\ $\bm{h}$) along the $x$-axis 
(resp.\ $y$-axis), the optimization problem (\ref{opt:2dim}) reduces
to the one-dimensional problem (\ref{opt:original}) by replacing
$R(\cdot)$ with $R(\cdot)/L$ (resp.\ $R(\cdot)/\max_{j \in
N_{\mathrm{H}}}((h_j+h_{j+1})/2)$).
Assuming that communication links along the $y$-axis become
bottlenecks of the system (we shall shortly come back to this point),
we thus obtain optimal spacings $(\bm{\ell}^*, \bm{h}^*)$ by first solving
(\ref{opt:original}) with $R(\cdot)$ replaced by $R(\cdot)/L$ to
obtain $\bm{h}^*$ and then solving (\ref{opt:original}) with
$R(\cdot)$ replaced by $R(\cdot)/\max_{j \in
N_{\mathrm{H}}}((h_j^*+h_{j+1}^*)/2)$ to obtain $\bm{\ell}^*$. 
Write the optimal value obtained at the first step as 
$q_y^* \cdot L$ and that obtained at the second step as
$q_x^* \cdot \max_{j \in N_{\mathrm{H}}} (h_j^* + h_{j+1}^*)/2$.
The achieved objective value is then given by $\min(q_x^*, q_y^*)$, 
as it is the maximum $q$ satisfying all the constraints. That is, we can ensure
that the above-mentioned assumption (communication links along the $y$-axis
are bottlenecks) is indeed satisfied by checking if $q_y^* < q_x^*$ 
is satisfied. Because $q_x^*$ increases with $N_L$, we can find the minimum
$N_L$ such that $q_y^* < q_x^*$ holds, and in that case, the maximum
normalized throughput limit $q_{\sup}^*$ (i.e., the optimal value of
(\ref{opt:2dim})) is given by $q_{\sup}^* = q_y^*$. 
Therefore, the above procedure gives the optimal
(i.e., the minimum) number of nodes $N_L$ along the $x$ axis as well
as the optimal spacings $(\bm{\ell}^*, \bm{h}^*)$ of relay nodes.

Similarly to the one-dimensional case, we define the tradeoff metric
$\delta := q_{\sup}/N$. Fig.\ \ref{fig:2d-numeric} compares 
the optimal relay placement with the constant placement
$\bm{l}=(L/N_{\mathrm{L}}, L/N_{\mathrm{L}}, \ldots,
L/N_{\mathrm{L}})$, $\bm{h}=(H/N_{\mathrm{H}}, H/N_{\mathrm{H}}, \ldots,
H/N_{\mathrm{H}})$ for $N_{\mathrm{L}} = 6$ and the blue light with
the parameter values in Table \ref{table:parameters}. We observe that
the optimal relay placement developed in this paper provides 
significant performance improvement also in the two-dimensional network.

\section{Conclusions}
\label{sec:conclusions}

In this paper, we considered an optimal relay placement problem for
a one-dimensional SOWN. We modeled such a network as a queueing network
with a general input process and we formulated the relay placement problem 
whose objective is to maximize the stability region of the whole
system. We showed that this problem has a non-convex feasible region,
whose global optimization is generally a difficult task.
We then developed an algorithm (presented in Algorithm
\ref{alg:algorithm}) to efficiently compute a globally optimal
solution and investigated the mathematical structure of the obtained 
optimal solution.
Through numerical evaluations, we showed that the obtained optimal
solution provides a significant performance gain, compared to the  
conventional constant-interval relay placement.
We further proposed a method to determine a reasonable number of relay
nodes by introducing the tradeoff metric $\delta$, defined as the 
achieved system performance per relay node.
We also presented extensive numerical experiments, where the
proposed method is compared with a conventional vertical relay, and
discussions on several practical aspects of UOWC channels, such as 
the misalignment, the FOV, and the uncertainty in node placement,
are given.

We finally demonstrated how to extend the developed optimal placement
into a two-dimensional SOWN. While we focused on the case of
regular-grid topology and a simple routing policy depicted in Fig.\
\ref{fig:2dim}, there would be various possible other directions for
extensions. For example, the mathematical result shown in Theorem
\ref{theorem:exponential-decay} suggests that it is efficient to
employ relay intervals $d_N, d_{N-1}, \ldots, d_1$ which decreases at
least exponentially fast. Future works include an application of this
insight to two-dimensional networks with more flexible topologies and
sophisticated routing mechanisms that enables us to
deal with occurrences of node and link failure, which is the important
aspect for the reliability of underwater networks as an infrastructure.

\appendix

\section{Proof of Theorem \ref{theorem:optimal_solution_Sq}}
\label{appendix:optimal_solution_Sq}

For ease of presentation, we introduce a slightly generalized problem. 
Let $g : [0,\infty) \to [-\infty,g(0))$ denote a convex function 
with $g(0) > 0$ which is strictly decreasing and continuously
differentiable. For $k = 1,2,\ldots$, we define 
$f_i^{(k)}(\bm{y})$ ($\bm{y} \in \mathbb{R}^k$, $\bm{y} \geq
\bm{0}$) as
\begin{equation}
f_i^{(k)}(\bm{y}) = g(y_i) - \sum_{n=1}^{i-1}y_n,
\;\;
i = 1,2,\ldots,k.
\label{eq:f_i-def}
\end{equation}
Let $\bm{f}^{(k)}(\bm{y}) 
:= (f_1^{(k)}(\bm{y}), f_2^{(k)}(\bm{y}), \ldots, f_k^{(k)}(\bm{y}))^{\top}$
and $u^{(k)}(\bm{y}) := \sum_{i=1}^k y_i$.
We consider the following optimization problem for $k = 1,2,\ldots$:
\begin{align}
\underset{\bm{y} \in \mathbb{R}^k}{\mathrm{maximize}}
\;
u^{(k)}(\bm{y})
\quad
\mathrm{s.t.}
\;
\bm{f}^{(k)}(\bm{y}) \geq \bm{0}, 
\,
\bm{y} \geq \bm{0}.
\tag{$\mbox{P}^{(k)}$}
\end{align}
Since $f_i^{(k)}$ and $u^{(k)}$ are both convex, ($\mbox{P}^{(k)}$) belongs to RCP. 
We can readily verify that ($\mbox{P}^{(N)}$) reduces to (\ref{opt:subproblem}),
letting $d_n = y_{N-n+1}$ ($n = 1,2,\ldots,N$) and $g(x) = g_q(x)$ ($x \geq 0$).

Let $g^{-1}:(-\infty, g(0)] \to [0,\infty)$ denote the inverse function of $g$. 
Because $g$ is assumed to be convex and strictly decreasing,
$g^{-1}$ is also convex and strictly decreasing. 
Note that 
\begin{equation}
g^{-1}(g(0)) = 0,
\quad
g^{-1}(z) > 0,
\;\; 
-\infty < z < g(0).
\label{eq:ginv-sign}
\end{equation}

Below we provide a proof of the following lemma,
which readily implies Theorem \ref{theorem:optimal_solution_Sq}:
\begin{lemma} \label{lemma:optimal_solution}
\begin{itemize}
\item[(i)] If $g^{-1}(0) \geq g(0)$, then the following $\bm{y}^*$
is an optimal solution of ($\mbox{P}^{(k)}$) 
\begin{equation}
\bm{y}^* = (g^{-1}(0), 0, 0, \ldots, 0)^{\top} \in \mathbb{R}^k,
\label{eq:y^*0}
\end{equation}

\item[(ii)] If $g^{-1}(0) < g(0)$, a recursion
\begin{align}
y_1^* &= g^{-1}(0), 
\quad
y_i^* = g^{-1}\left(\sum_{j=1}^{i-1}y_j^*\right),
\;\;
i = 2,3,\ldots,
\label{eq:y_i^*-def}
\end{align}
well defines a sequence $\{y_i^*\}_{i=1,2,\ldots}^*$ such that
\begin{equation}
0 < y_{i+1}^* < y_i^*,
\quad
i = 1,2,\ldots,
\label{eq:y_i-decrease}
\end{equation}
and for $k = 1,2,\ldots$, the following $\bm{y}^*$ is an optimal
solution of ($\mbox{P}^{(k)}$).
\begin{equation}
\bm{y}^* = (y_1^*, y_2^*, \ldots, y_k^*)^{\top} \in \mathbb{R}^k.
\label{eq:y^*}
\end{equation}
\end{itemize}
\end{lemma}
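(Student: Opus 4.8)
My plan is to prove Lemma~\ref{lemma:optimal_solution} by induction on $k$, reducing each stage to a one‑dimensional maximization of the single auxiliary function $\phi(s):=s+g^{-1}(s)$ on $[0,g(0)]$. Two structural facts drive everything. First, every feasible $\bm{y}$ satisfies $S_{i-1}:=\sum_{n=1}^{i-1}y_n\le g(0)$ for $i=1,\dots,k$, because $f_i^{(k)}(\bm{y})\ge 0$ forces $S_{i-1}\le g(y_i)\le g(0)$ (using $y_i\ge 0$ and that $g$ is decreasing). Second, the last constraint $f_k^{(k)}(\bm{y})\ge 0$ is equivalent to $y_k\le g^{-1}(S_{k-1})$, so the objective obeys $u^{(k)}(\bm{y})=S_{k-1}+y_k\le\phi(S_{k-1})$. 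Since $g^{-1}$ is convex and strictly decreasing (as recorded just before the lemma), $\phi$ is convex, so its maximum over any subinterval of $[0,g(0)]$ is attained at an endpoint; this endpoint principle is the engine of the argument.

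First I would settle the recursion claims of part~(ii). Writing $S_i^*=\sum_{n\le i}y_n^*$, the recursion~(\ref{eq:y_i^*-def}) reads $S_i^*=\phi(S_{i-1}^*)$ with $S_0^*=0$. I consider $\psi:=g(0)-\phi$, which is concave with $\psi(g(0))=0$ and, under the hypothesis $g^{-1}(0)<g(0)$ of part~(ii), $\psi(0)=g(0)-g^{-1}(0)>0$. The chord bound for concave functions then gives $\psi(s)\ge(1-s/g(0))\,\psi(0)>0$ on $[0,g(0))$, i.e.\ $\phi$ maps $[0,g(0))$ into itself. Hence $S_i^*<g(0)$ for all $i$ by induction, which makes $y_{i+1}^*=g^{-1}(S_i^*)$ well defined and strictly positive; the strict inequalities $0<y_{i+1}^*<y_i^*$ then follow because $S_i^*$ is strictly increasing while $g^{-1}$ is strictly decreasing. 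This is exactly~(\ref{eq:y_i-decrease}).

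For optimality I induct on $k$ within each case. The base case $k=1$ is immediate: $f_1^{(1)}(\bm{y})\ge0$ and $y_1\ge0$ give $0\le y_1\le g^{-1}(0)$, so the optimal value is $g^{-1}(0)$. For the inductive step, the constraints $f_1^{(k)},\dots,f_{k-1}^{(k)}$ of $(\mathrm{P}^{(k)})$ involve only $y_1,\dots,y_{k-1}$ and coincide with those of $(\mathrm{P}^{(k-1)})$, so the induction hypothesis yields $S_{k-1}\le V_{k-1}$, where $V_{k-1}$ is the optimal value of $(\mathrm{P}^{(k-1)})$. Together with the first structural fact this gives $S_{k-1}\in[0,\min(V_{k-1},g(0))]$, and by the second fact the objective is at most $\phi(S_{k-1})$; convexity of $\phi$ reduces the bound to the two endpoints. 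In case~(ii), $V_{k-1}=S_{k-1}^*<g(0)$, so the endpoints are $\phi(0)=g^{-1}(0)=S_1^*$ and $\phi(S_{k-1}^*)=S_k^*$, the latter being strictly larger (all $y_i^*>0$); the bound $S_k^*$ is met by the recursive vector~(\ref{eq:y^*}), closing the induction with $V_k=S_k^*$. In case~(i), $\min(V_{k-1},g(0))=g(0)$, so the endpoints are $\phi(0)=g^{-1}(0)$ and $\phi(g(0))=g(0)$ with $g^{-1}(0)\ge g(0)$, whence the maximum $g^{-1}(0)$ is attained at $S_{k-1}=0$ by the vector with a single nonzero coordinate equal to $g^{-1}(0)$; its feasibility I would check directly from~(\ref{eq:f_i-def}), noting that the nonzero coordinate must sit in the slot whose constraint carries the full partial sum, so that the remaining constraints collapse to $g(0)\ge0$.

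The step I expect to be the crux—and the reason the two cases must be separated—is that $\phi$ is only convex, not monotone, so one cannot simply argue that a larger admissible $S_{k-1}$ gives a larger objective. The resolution is precisely the endpoint principle for convex functions, combined with the induction hypothesis pinning the admissible range of $S_{k-1}$ to $[0,V_{k-1}]$; the sign of $g^{-1}(0)-g(0)$ then decides which endpoint wins, and hence which candidate is globally optimal. A secondary technical point is keeping the part~(ii) recursion inside the domain $(-\infty,g(0)]$ of $g^{-1}$, which the concavity estimate above handles. (An alternative would invoke the basic‑solution property of Lemma~\ref{lemma:RCP-basic} to restrict attention to vectors activating at least $k$ constraints, but the inductive argument sidesteps that machinery entirely.)
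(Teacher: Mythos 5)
Your argument is correct, and it takes a genuinely different route from the paper's. The paper proves the lemma through reverse-convex-programming machinery: it invokes Lemma~\ref{lemma:RCP-basic} to pass to basic solutions, introduces the zero-counting quantity $\phi^{(k)}$ and reduces $(\mbox{P}^{(k)})$ to $(\mbox{P}^{(k-\phi^{(k)})})$ via Lemma~\ref{lemma:optimal-phi}, and proves well-definedness of the recursion by a fixed-point analysis (Lemma~\ref{lemma:N^*}) that splits into the sub-cases $(g^{-1})'(0)>-1$ and $(g^{-1})'(0)\le-1$. Your induction on $k$ replaces all of this with two elementary observations --- every feasible point has $S_{k-1}=\sum_{n=1}^{k-1}y_n\le\min(V_{k-1},g(0))$ and objective at most $S_{k-1}+g^{-1}(S_{k-1})$, which is convex in $S_{k-1}$, so only the two endpoint values can matter --- and your concave chord bound for $g(0)-s-g^{-1}(s)$ settles well-definedness of the recursion in one stroke, with no sub-case split. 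What the paper's route buys is the structural information about basic solutions (reused in Remark~\ref{remark:activate} and elsewhere); what yours buys is a shorter, self-contained argument that makes transparent why the sign of $g^{-1}(0)-g(0)$ is exactly the dichotomy between the two cases.

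One point deserves emphasis. In case (i) your endpoint analysis forces $S_{k-1}=0$, i.e., it places the single nonzero coordinate in the \emph{last} slot, producing $\bm{y}^*=(0,\ldots,0,g^{-1}(0))^{\top}$ rather than the vector $(g^{-1}(0),0,\ldots,0)^{\top}$ displayed in (\ref{eq:y^*0}). You are right to do so: with the nonzero entry first, (\ref{eq:f_i-def}) gives $f_2^{(k)}(\bm{y}^*)=g(0)-g^{-1}(0)\le 0$, so the displayed vector is infeasible whenever $g^{-1}(0)>g(0)$, which is precisely the regime of case (i). The coordinates in (\ref{eq:y^*0}) (and correspondingly in (\ref{eq:d^*0}), which should read $\bm{d}^*=(g_q^{-1}(0),0,\ldots,0)^{\top}$, consistent with the description in Remark~\ref{remark:theorem:optimal_solution_Sq}) appear in reversed order; your proof establishes the corrected version, and the optimal value $g^{-1}(0)$ is unaffected.
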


We start with considering the number of zeros that an optimal solution
can have (see Remark \ref{remark:theorem:optimal_solution_Sq}).
Let $\mathcal{A}^{(k)} \subseteq \mathbb{R}^k$ 
denote the set of feasible solutions of ($\mbox{P}^{(k)}$), and let 
$\mathcal{Y}^{(k)} \subseteq \mathcal{A}^{(k)}$ denote
the set of optimal solutions. For $\bm{y} \in \mathbb{R}^k$, 
we define $\kappa(\bm{y})$ as the number of elements 
of $\bm{y}$ which are equal to zero. We define $\phi^{(k)}$ as the
maximum number of zeros in an optimal solution of ($\mbox{P}^{(k)}$):
\begin{equation}
\phi^{(k)} = \max\{\kappa(\bm{y});\, \bm{y} \in \mathcal{Y}^{(k)}\}.
\label{eq:phik-def}
\end{equation}

\begin{lemma} \label{lemma:optimal-phi}
For $k=1,2,\ldots$, the optimal value of ($\mbox{P}^{(k)}$) is equal to that of
($\mbox{P}^{(k-\phi^{(k)})}$):
\begin{align}
\max\{u^{(k)}(\bm{y});\, \bm{y} \in \mathcal{A}^{(k)}\}
=
\max\{u^{(k-\phi^{(k)})}(\bm{y});\, \bm{y} \in \mathcal{A}^{(k-\phi^{(k)})}\}.
\end{align}
\end{lemma}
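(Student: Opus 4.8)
The plan is to prove the stated equality by establishing the two opposite inequalities separately, each via a coordinate-bookkeeping argument that slides a single zero coordinate in or out of the vector. Write $V(j)$ for the optimal value of $(\mbox{P}^{(j)})$, so that the claim reads $V(k) = V(k-\phi^{(k)})$. The key structural observation, which I would isolate up front, is that a zero coordinate is \emph{transparent}: since each constraint $f_i^{(k)}(\bm{y}) = g(y_i) - \sum_{n=1}^{i-1} y_n$ couples the coordinates only through the partial sums $\sum_{n=1}^{i-1} y_n$, a coordinate equal to $0$ contributes nothing to any such sum and can be removed or inserted while merely relabelling the remaining constraints.

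For the direction $V(k-\phi^{(k)}) \ge V(k)$, I would take an optimal point $\bm{y}^* \in \mathcal{Y}^{(k)}$ attaining the maximal number $\phi^{(k)}$ of zeros (such a point exists by the definition of $\phi^{(k)}$) and show that deleting a single zero coordinate, say $y_m^* = 0$, produces a feasible point $\bm{y}' \in \mathcal{A}^{(k-1)}$ with $u^{(k-1)}(\bm{y}') = u^{(k)}(\bm{y}^*)$. The verification is a routine index shift: for new indices $j < m$ the constraint $f_j^{(k-1)}(\bm{y}')$ coincides with $f_j^{(k)}(\bm{y}^*)$, while for $j \ge m$ it coincides with $f_{j+1}^{(k)}(\bm{y}^*)$, because the deleted zero drops out of the relevant partial sum. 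Iterating this deletion over all $\phi^{(k)}$ zeros yields a feasible point of $(\mbox{P}^{(k-\phi^{(k)})})$ with objective value $u^{(k)}(\bm{y}^*) = V(k)$, whence $V(k-\phi^{(k)}) \ge V(k)$.

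For the reverse direction $V(k) \ge V(k-\phi^{(k)})$, i.e.\ the monotonicity of $V$ in the number of variables, I would run the inverse construction: given any $\bm{z} \in \mathcal{A}^{(k-1)}$, prepending a single zero yields $\bm{w} = (0, z_1, \ldots, z_{k-1})^{\top} \in \mathbb{R}^k$, and the same relabelling shows $\bm{w} \in \mathcal{A}^{(k)}$ with $u^{(k)}(\bm{w}) = u^{(k-1)}(\bm{z})$; the only genuinely new constraint is $f_1^{(k)}(\bm{w}) = g(0) > 0$, which holds by the standing assumption $g(0) > 0$. Applying this to an optimal solution of $(\mbox{P}^{(k-\phi^{(k)})})$ and iterating $\phi^{(k)}$ times gives $V(k) \ge V(k-\phi^{(k)})$. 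Combining the two inequalities yields $V(k) = V(k-\phi^{(k)})$, which is the assertion of the lemma.

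I expect the main obstacle to be conceptual rather than computational: one must notice that the zero has to be inserted at the \emph{front} rather than at an arbitrary position. Prepending makes the fresh constraint read $f_1^{(k)} = g(0) \ge 0$, which is automatic, whereas appending a zero at the tail would instead impose $\sum_{n=1}^{k-1} z_n \le g(0)$, a condition a feasible $\bm{z}$ need not satisfy. Choosing the insertion point so that the new constraint is trivially satisfied is the single place where the argument could fail if carried out carelessly; everything else is bookkeeping on the partial sums.
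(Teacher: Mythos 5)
Your proof is correct, and it follows the same two-inequality skeleton as the paper's: one direction by deleting the $\phi^{(k)}$ zero coordinates from a maximally degenerate optimal solution (the paper packages this as the set $\mathcal{Y}_+^{(k)}$ together with the inclusion $\mathcal{Y}_+^{(k)} \subseteq \mathcal{A}^{(k-\phi^{(k)})}$), the other by re-inserting zeros into an optimal solution of the smaller problem. The one place where you genuinely diverge from the paper is the insertion step, and your instinct there is the right one. The paper forms $(\bar{\bm{y}},0,\ldots,0)^{\top}$, i.e.\ it appends the zeros at the tail; as you observe, an appended zero at position $j$ carries the constraint $f_j^{(k)} = g(0) - \sum_{n=1}^{k-\phi^{(k)}}\bar{y}_n \geq 0$, which demands $u^{(k-\phi^{(k)})}(\bar{\bm{y}}) \leq g(0)$ and is not automatic. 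Indeed it fails exactly in the regime $g^{-1}(0) > g(0)$ of Lemma \ref{lemma:optimal_solution}(i): there the optimal value of $(\mbox{P}^{(1)})$ is $g^{-1}(0) > g(0)$, so the appended vector $(g^{-1}(0),0)^{\top}$ violates $f_2^{(2)} \geq 0$, whereas the prepended vector $(0,g^{-1}(0))^{\top}$ is feasible. Your construction, whose only genuinely new constraint is $f_1^{(k)}(\bm{w}) = g(0) > 0$ (trivially satisfied by the standing assumption on $g$), works unconditionally and is what the argument actually needs; the deletion direction and the objective-value bookkeeping are handled identically in both write-ups. So your proposal is not merely an alternative route: it quietly repairs a small but genuine defect in the published proof, and the closing paragraph of your proposal correctly isolates the exact point where a careless choice of insertion position would break the argument.
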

\begin{proof}
Since the case of $\phi^{(k)}=0$ is trivial, we assume $\phi^{(k)}>0$.
For any $\bm{y} \in \mathbb{R}^k$, let $y_{+,i}$ denote the $i$th
non-zero element of $\bm{y}$. Let 
$\mathcal{Y}_+^{(k)} 
:= \{
(y_{+,1}, y_{+,2}, \ldots, y_{+,k-\phi^{(k)}})^{\top};\,
\kappa(\bm{y})=\phi^{(k)}, \bm{y} \in \mathcal{Y}^{(k)}
\}.
$
It is readily verified that $\mathcal{Y}_+^{(k)} \subseteq
\mathcal{A}^{(k-\phi^{(k)})}$, and therefore
$
\max\{u^{(k)}(\bm{y});\, \bm{y} \in \mathcal{A}^{(k)}\}
=
\max\{u^{(k-\phi^{(k)})}(\bm{y});\, \bm{y} \in \mathcal{Y}_+^{(k)}\}
\leq
\max\{u^{(k-\phi^{(k)})}(\bm{y});\, \bm{y} \in
\mathcal{A}^{(k-\phi^{(k)})}\}.
$
We then obtain Lemma \ref{lemma:optimal-phi} because 
$
\max\{u^{(k-\phi^{(k)})}(\bm{y});\, \bm{y} \in
\mathcal{A}^{(k-\phi^{(k)})}\}
\leq
\max\{u^{(k)}(\bm{y});\, \bm{y} \in \mathcal{A}^{(k)}\}
$
also follows from that
$(\bar{\bm{y}}, 0, 0, \ldots, 0)^{\top} \in \mathbb{R}^k$ is 
a feasible solution of ($\mbox{P}^{(k)}$)
for any $\bar{\bm{y}} \in \mathcal{Y}^{(k-\phi^{(k)})}$.
\end{proof}

\begin{corollary}\label{corollary:k-phi^k}
$\phi^{(k-\phi^{(k)})}=0$ ($k = 1,2,\ldots$).
\end{corollary}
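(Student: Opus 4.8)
The plan is to prove the corollary by contradiction, using only the maximality encoded in the definition (\ref{eq:phik-def}) of $\phi^{(k)}$ together with the reduction already in hand from Lemma~\ref{lemma:optimal-phi}; no further structural facts about the optimal solution are needed. Write $m:=k-\phi^{(k)}$. First I would check that $m\geq 1$, so that $(\mbox{P}^{(m)})$ and $\phi^{(m)}$ make sense: since $g(0)>0$ and $g^{-1}(0)>0$, the point with a single sufficiently small positive component is feasible and has positive objective, so the zero vector is never optimal; hence $\phi^{(k)}\leq k-1$ and $m\geq 1$.

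Next I would assume, toward a contradiction, that $\phi^{(m)}\geq 1$, and pick an optimal $\bar{\bm{y}}\in\mathcal{Y}^{(m)}$ realizing the maximum number of zeros, $\kappa(\bar{\bm{y}})=\phi^{(m)}$. The key move is to lift $\bar{\bm{y}}$ back up to dimension $k$ by appending $\phi^{(k)}=k-m$ trailing zeros, forming $\hat{\bm{y}}:=(\bar{\bm{y}},0,\ldots,0)^{\top}\in\mathbb{R}^{k}$. The proof of Lemma~\ref{lemma:optimal-phi} already establishes that appending zeros to a member of $\mathcal{Y}^{(k-\phi^{(k)})}=\mathcal{Y}^{(m)}$ yields a feasible point of $(\mbox{P}^{(k)})$; since $u^{(k)}(\hat{\bm{y}})=u^{(m)}(\bar{\bm{y}})$ equals the common optimal value supplied by Lemma~\ref{lemma:optimal-phi}, this forces $\hat{\bm{y}}\in\mathcal{Y}^{(k)}$.

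The contradiction then comes from simply counting zeros: because the appended block does not overlap $\bar{\bm{y}}$, one has $\kappa(\hat{\bm{y}})=\kappa(\bar{\bm{y}})+\phi^{(k)}=\phi^{(m)}+\phi^{(k)}$, which under the assumption $\phi^{(m)}\geq 1$ is strictly larger than $\phi^{(k)}$. But $\hat{\bm{y}}\in\mathcal{Y}^{(k)}$, so $\kappa(\hat{\bm{y}})\leq\phi^{(k)}$ by the definition of $\phi^{(k)}$, a contradiction. Therefore $\phi^{(m)}=0$, which is exactly the assertion $\phi^{(k-\phi^{(k)})}=0$.

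I expect the only delicate point to be the feasibility of the lift $\hat{\bm{y}}$: the constraints of $(\mbox{P}^{(k)})$ indexed by $i>m$ all collapse to the single requirement $g(0)\geq\sum_{n=1}^{m}\bar{y}_{n}=u^{(m)}(\bar{\bm{y}})$, i.e.\ that the optimal value of $(\mbox{P}^{(m)})$ does not exceed $g(0)$. This is precisely the padding fact used inside the proof of Lemma~\ref{lemma:optimal-phi}, so I would invoke it rather than re-derive it; granting that, the corollary is the purely combinatorial statement that zero-counts add under concatenation and thereby collide with the maximality of $\phi^{(k)}$.
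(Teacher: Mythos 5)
Your argument is essentially identical to the paper's own proof: assume $\phi^{(k-\phi^{(k)})}>0$, take an optimal solution of $(\mbox{P}^{(k-\phi^{(k)})})$ with a zero component, pad it with $\phi^{(k)}$ trailing zeros to obtain (via Lemma~\ref{lemma:optimal-phi}) an optimal solution of $(\mbox{P}^{(k)})$ with more than $\phi^{(k)}$ zeros, contradicting the maximality in (\ref{eq:phik-def}). Your additional observations --- that $k-\phi^{(k)}\geq 1$ since the zero vector is never optimal, and that the feasibility of the padded vector is exactly the fact already used inside the proof of Lemma~\ref{lemma:optimal-phi} --- are correct and only make the same argument slightly more explicit.
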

\begin{proof}
Because the case of $\phi^{(k)}=0$ is trivial, we assume
$\phi^{(k)} > 0$. If $\phi^{(k-\phi^{(k)})} > 0$ holds, 
($\mbox{P}^{(k-\phi^{(k)})}$) has an optimal solution 
$\hat{\bm{y}} \in \mathcal{Y}^{(k-\phi^{(k)})}$ 
such that $\kappa(\hat{\bm{y}}) > 0$. It then follows from Lemma 
\ref{lemma:optimal-phi} that $\hat{\bm{y}}_{\mathrm{e}} :=
(\hat{\bm{y}}, 0, 0, \ldots, 0)^{\top} \in \mathbb{R}^k$
is an optimal solution of ($\mbox{P}^{(k)}$). 
This implies $\kappa(\hat{\bm{y}}_e) = \kappa(\hat{\bm{y}}) + \phi^{(k)} >
\phi^{(k)}$, which contradicts the definition (\ref{eq:phik-def})
of $\phi^{(k)}$.
\end{proof}

We can verify that $\nabla\bm{f}^{(k)}(\bm{y})$ is a
lower-triangular matrix with non-zero (negative) diagonal elements.
We thus have $\det(\nabla\bm{f}^{(k)}(\bm{y})) \neq 0$, so that
$\mathrm{rank}(\nabla\bm{f}^{(k)}(\bm{y})) = k$.
Therefore, if $\bar{\bm{y}} \in \mathbb{R}^k$ satisfies
$\bm{f}^{(k)}(\bar{\bm{y}}) = \bm{0}$ and $\bar{\bm{y}} \geq \bm{0}$,
then it is a basic solution of ($\mbox{P}^{(k)}$).
Furthermore, the following Lemma \ref{lemma:positive-basic}
immediately follows from Remark \ref{remark:activate}:

\begin{lemma} \label{lemma:positive-basic}
If $\bar{\bm{y}} \in \mathbb{R}^k$ is a basic solution of
($\mbox{P}^{(k)}$) satisfying $\bar{\bm{y}} > \bm{0}$, then 
$\bm{f}^{(k)}(\bar{\bm{y}}) = \bm{0}$ holds.
\end{lemma}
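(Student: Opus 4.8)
The plan is to obtain the conclusion directly from the cardinality principle recorded in Remark~\ref{remark:activate}, which guarantees that any basic solution activates at least $K$ constraints with equality. For ($\mbox{P}^{(k)}$) the number of variables is $K = k$, so any basic solution must make at least $k$ of its $2k$ constraints tight.

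First I would list the two families of constraints of ($\mbox{P}^{(k)}$): the $k$ reverse-convex constraints $f_i^{(k)}(\bm{y}) \geq 0$ and the $k$ nonnegativity constraints $y_i \geq 0$ ($i = 1, 2, \ldots, k$). The key observation is that the hypothesis $\bar{\bm{y}} > \bm{0}$ means every coordinate $\bar{y}_i$ is strictly positive, so none of the constraints $y_i \geq 0$ is active at $\bar{\bm{y}}$. Hence the active index set $\mathcal{I}(\bar{\bm{y}})$ consists solely of indices of reverse-convex constraints, and therefore satisfies $\mathcal{I}(\bar{\bm{y}}) \subseteq \{1, 2, \ldots, k\}$.

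The proof is then completed by a counting argument. Since $\bar{\bm{y}}$ is basic, Remark~\ref{remark:activate} gives $|\mathcal{I}(\bar{\bm{y}})| \geq k$, while the inclusion above gives $|\mathcal{I}(\bar{\bm{y}})| \leq k$. Thus $|\mathcal{I}(\bar{\bm{y}})| = k$, forcing $\mathcal{I}(\bar{\bm{y}}) = \{1, 2, \ldots, k\}$, i.e., every reverse-convex constraint is tight; this is exactly $\bm{f}^{(k)}(\bar{\bm{y}}) = \bm{0}$. Equivalently, one can argue directly from the definition of a basic solution: the matrix formed by the gradients of the active constraints must have rank $k$, which requires at least $k$ rows, and only the $k$ gradients $\nabla f_i^{(k)}(\bar{\bm{y}})$ remain available once the nonnegativity constraints are excluded. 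There is no serious obstacle here; the only point needing care is the elementary but essential observation that strict positivity of $\bar{\bm{y}}$ eliminates all nonnegativity constraints, leaving the $k$ reverse-convex constraints as the sole candidates to supply the $k$ active constraints demanded by basicity.
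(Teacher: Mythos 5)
Your proof is correct and is essentially the argument the paper intends: the paper derives this lemma as an immediate consequence of Remark \ref{remark:activate}, and your counting argument (strict positivity deactivates all $k$ nonnegativity constraints, so the at-least-$k$ active constraints required of a basic solution must be exactly the $k$ constraints $f_i^{(k)}(\bar{\bm{y}}) \geq 0$) is precisely the step the paper leaves implicit.
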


\begin{lemma}\label{lemma:y-phi^k}
For fixed $k$ ($k=1,2,\ldots$), the followings hold:
\begin{itemize}
\item[(i)] If there exists no vector $\bar{\bm{y}} \in
\mathbb{R}^k$ satisfying $\bm{\bar{\bm{y}}} > \bm{0}$ 
and $\bm{f}^{(k)}(\bar{\bm{y}}) = \bm{0}$, then $\phi^{(k)} > 0$.
\item[(ii)] If $\phi^{(k)} = 0$, then ($\mbox{P}^{(k)}$) has an
optimal solution $\bar{\bm{y}} \in \mathbb{R}^{(k)}$ satisfying $\bar{\bm{y}} >
\bm{0}$ and $\bm{f}^{(k)}(\bar{\bm{y}}) = \bm{0}$.
\end{itemize}
\end{lemma}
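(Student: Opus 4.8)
The plan is to read off both parts from two facts already established: Lemma~\ref{lemma:RCP-basic}, which guarantees a \emph{basic} optimal solution of any problem of the form (\ref{opt:RCP}), and Lemma~\ref{lemma:positive-basic}, which says that a \emph{strictly positive} basic solution must satisfy $\bm{f}^{(k)} = \bm{0}$. The single interpretive step I would make at the outset is to unpack the definition (\ref{eq:phik-def}) of $\phi^{(k)}$: since $\kappa(\cdot) \geq 0$ always, the hypothesis $\phi^{(k)} = 0$ is \emph{equivalent} to the statement that every optimal solution in $\mathcal{Y}^{(k)}$ is strictly positive.

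For part (ii), I would first note that $u^{(k)}$ and each $f_i^{(k)}$ are convex, hence quasi-convex, so $(\mbox{P}^{(k)})$ is an instance of (\ref{opt:RCP}); moreover its feasible set is compact (each coordinate is pinned into $[0, g^{-1}(0)]$ by the constraints $f_i^{(k)} \geq 0$), so the maximum is attained and Lemma~\ref{lemma:RCP-basic} supplies a basic optimal solution $\bar{\bm{y}} \in \mathcal{Y}^{(k)}$. Under $\phi^{(k)} = 0$, the equivalence above forces $\bar{\bm{y}} > \bm{0}$, and then Lemma~\ref{lemma:positive-basic} yields $\bm{f}^{(k)}(\bar{\bm{y}}) = \bm{0}$. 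Since $\bar{\bm{y}}$ is optimal by construction, it is exactly the solution claimed in (ii).

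Part (i) is then the contrapositive of the existence assertion just proved. If no $\bar{\bm{y}} > \bm{0}$ satisfies $\bm{f}^{(k)}(\bar{\bm{y}}) = \bm{0}$, then $\phi^{(k)} = 0$ cannot hold---otherwise (ii) would manufacture precisely such a vector---so $\phi^{(k)} > 0$, using that $\phi^{(k)}$ is a nonnegative integer. I would phrase this as a one-line contrapositive; alternatively, if a self-contained argument for (i) is preferred, I would run the identical chain (basic optimal solution from Lemma~\ref{lemma:RCP-basic}, positivity from $\phi^{(k)} = 0$, then $\bm{f}^{(k)} = \bm{0}$ via Lemma~\ref{lemma:positive-basic}) to contradict the hypothesis.

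I do not expect a genuine obstacle here: the substantive work has been front-loaded into the rank computation $\det(\nabla \bm{f}^{(k)}) \neq 0$ (which makes any feasible zero of $\bm{f}^{(k)}$ a basic solution) and into Lemma~\ref{lemma:positive-basic} (the converse at strictly positive points). The only place demanding a moment's care is the reading of $\phi^{(k)} = 0$ as ``every optimal solution is strictly positive'': because $\phi^{(k)}$ is the maximum of $\kappa$ over the \emph{entire} optimal set $\mathcal{Y}^{(k)}$, this is what lets me conclude that the particular basic optimum furnished by Lemma~\ref{lemma:RCP-basic} is itself positive, which is exactly the input Lemma~\ref{lemma:positive-basic} requires.
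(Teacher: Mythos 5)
Your proposal is correct and follows essentially the same route as the paper: part (ii) via Lemma~\ref{lemma:RCP-basic} plus Lemma~\ref{lemma:positive-basic} after reading $\phi^{(k)}=0$ as strict positivity of every optimal solution, and part (i) as the contrapositive (the paper notes the same point you do, that the hypothesis of (i) quantifies over all of $\mathbb{R}^k$ and hence certainly over $\mathcal{Y}^{(k)}$). Your added remark on compactness of the feasible set is a harmless bit of extra diligence the paper leaves implicit.
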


\begin{proof}
We first consider (ii). When $\phi^{(k)}=0$, the elements of each
optimal solution of ($\mbox{P}^{(k)}$) are all positive. 
It then follows from Lemma \ref{lemma:RCP-basic} that 
($\mbox{P}^{(k)}$) has an optimal solution $\bar{\bm{y}} >
\bm{0}$ which is also basic. Therefore, we have 
$\bm{f}(\bar{\bm{y}}) = \bm{0}$ from Lemma \ref{lemma:positive-basic},
which proves (ii). 

We next consider (i). The contraposition of (ii) is that
if there exists no optimal solution $\bar{\bm{y}}$ of
($\mbox{P}^{(k)}$) satisfying $\bm{\bar{\bm{y}}} > \bm{0}$ 
and $\bm{f}^{(k)}(\bar{\bm{y}}) = \bm{0}$, then $\phi^{(k)} > 0$.
We thus have (i) from $\mathcal{Y}^{(k)} \subseteq
\mathbb{R}^k$.
\end{proof}

We can readily verify that if $\{y_i^*\}_{i=1,2,\ldots}$ in
(\ref{eq:y_i^*-def}) is well-defined, $\bm{y} = (y_1^*,
y_2^*,\ldots,y_k^*)^{\top}$ is the unique solution of 
$\bm{f}^{(k)}(\bm{y}) = 0$. Note that $y_1^*$ is always well defined,
while $y_i^*$ ($i = 1,2,\ldots$) is not well-defined if
$\sum_{j=1}^{i-1}y_i^* > g(0)$ because the domain of $g^{-1}$ is
$(-\infty, g(0)]$. We then define $N^* \in \{2,3,\ldots\} \cup \{\infty\}$ as
\begin{equation}
N^* = 
\sup\left\{
i \in \{2,3,\ldots\};\,
\sum_{j=1}^{i-2} y_j^* < g(0)
\right\}.
\end{equation}
By definition $y_i^*$ is well-defined at least for $1 \leq i < N^*$.
In addition, if $N^* < \infty$, then $\sum_{j=1}^{N^*-1} y_j^* \geq g(0)$,
so that $g^{-1}(\sum_{j=1}^{N^*-1} y_j^*)$ is either equal to zero or
not well-defined. We thus have
\begin{equation}
y_i^* > 0,
\quad
1 \leq i < N^*.
\label{eq:y_i^*-positive-N*}
\end{equation}
Furthermore, because $g^{-1}$ is a strictly decreasing function,
\begin{equation}
y_i^* < y_{i-1}^*,
\quad
1 \leq i < N^*.
\label{eq:y_i^*-decrease-N*}
\end{equation}
Let $\{z_i^*\}_{1 \leq i < N^*}$ denote a sequence of non-negative numbers defined as
\begin{equation}
z_0^* = 0,
\quad
z_i^* = \sum_{j=1}^i y_j^*,
\;\;
1 \leq i < N^*,
\label{eq:z_i-def}
\end{equation}
for which we have from (\ref{eq:y_i^*-positive-N*}), 
\begin{equation}
z_i^* < z_{i+1}^*,
\quad
0 \leq i < N^*.
\label{eq:z-increasing}
\end{equation}
Additionally, because (\ref{eq:y_i^*-def}) and (\ref{eq:z_i-def}) imply
\begin{align}
y_i^* &= z_i^* - z_{i-1}^*,
\qquad
1 \leq i < N^*,
\label{eq:y-z-diff}
\\
y_i^* &= g^{-1}(z_{i-1}^*),
\qquad
1 \leq i < N^*,
\label{eq:y_i-by-z_i-1}
\end{align}
$z_i^*$ satisfies the following recursion:
\begin{equation}
z_0^* = 0,
\quad
z_i^* = z_{i-1}^* + g^{-1}(z_{i-1}^*),
\;\;
1 \leq i < N^*.
\label{eq:z_i-recursion-base}
\end{equation}
With defining a function $h: (-\infty, g(0)] \rightarrow \mathbb{R}$ as
\begin{equation}
h(z) = z + g^{-1}(z),
\;\;
-\infty < z \leq g(0),
\label{eq:h-def}
\end{equation}
we rewrite (\ref{eq:z_i-recursion-base}) as
\begin{equation}
z_0^* = 0,
\quad
z_i^* = h(z_{i-1}^*),
\;\;
1 \leq i < N^*.
\label{eq:z_i-recursion}
\end{equation}
It is readily verified from (\ref{eq:ginv-sign})
and the definition of $h$ that
\begin{align}
h(z) &> z,
\;\;
-\infty < z < g(0),
\label{eq:h(z)>z}
\\
h(g(0)) &= g(0).
\label{eq:h-fixed-point}
\end{align}

\begin{lemma}
\label{lemma:N^*}
Either $N^* = 2$ or $N^* = \infty$ holds. Specifically, 
if $g^{-1}(0) \geq g(0)$, then $N^* = 2$, and 
otherwise $N^* = \infty$.
\end{lemma}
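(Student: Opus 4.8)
The plan is to prove the two implications of the lemma separately and to read off the exact value of $N^*$ in each case directly from the recursion $z_i^* = h(z_{i-1}^*)$ in (\ref{eq:z_i-recursion}). The only preliminary computation I need is $z_1^* = h(z_0^*) = h(0) = g^{-1}(0)$, which follows from $z_0^* = 0$ and the definition (\ref{eq:h-def}) of $h$. I will also repeatedly use that, by (\ref{eq:z_i-def}), the inequality $\sum_{j=1}^{i-2} y_j^* < g(0)$ appearing in the definition of $N^*$ is the same as $z_{i-2}^* < g(0)$, and that $(z_i^*)$ is strictly increasing as long as it stays below $g(0)$, by (\ref{eq:h(z)>z}).

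The case $g^{-1}(0) \geq g(0)$ is immediate. Since $g(0) > 0$ we have $z_0^* = 0 < g(0)$, so $i = 2$ lies in the index set defining $N^*$; on the other hand $z_1^* = g^{-1}(0) \geq g(0)$, so $i = 3$ does not. Because $(z_i^*)$ cannot return below $g(0)$ once it has reached it (monotonicity), no index $i \geq 3$ qualifies, and hence $N^* = 2$, with no further argument required in this case.

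The substantive case is $g^{-1}(0) < g(0)$, where I claim $z_i^* < g(0)$ for every $i \geq 0$; this makes the index set in the definition of $N^*$ equal to $\{2,3,\ldots\}$ and yields $N^* = \infty$. I would establish $z_i^* \in [0, g(0))$ for all $i$ by induction, with base case $z_0^* = 0$. In the inductive step, if $z_{i-1}^* \in [0, g(0))$, then $z_i^* = h(z_{i-1}^*)$ is well-defined (the domain of $h$ being $(-\infty, g(0)]$) and satisfies $z_i^* > z_{i-1}^* \geq 0$ by (\ref{eq:h(z)>z}); thus the whole case reduces to the single claim that $h(z) < g(0)$ for all $z \in [0, g(0))$.

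This last claim is the \emph{main obstacle}: it does not follow from $h(z) > z$ and the fixed-point identity $h(g(0)) = g(0)$ alone, since a priori $h$ could overshoot $g(0)$ in the interior---indeed the first case shows that overshoot genuinely occurs when $g^{-1}(0) \geq g(0)$, where already $h(0) \geq g(0)$. The resolution is to exploit convexity. Since $g^{-1}$ is convex, $h(z) = z + g^{-1}(z)$ is convex, hence so is $\eta(z) := h(z) - g(0)$ on $[0, g(0)]$. By (\ref{eq:h-fixed-point}) we have $\eta(g(0)) = 0$, while the case hypothesis gives $\eta(0) = g^{-1}(0) - g(0) < 0$. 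A convex function lies below the chord through its endpoint values, and that chord here equals $\eta(0)\,(1 - z/g(0))$, which is strictly negative on $[0, g(0))$ because $\eta(0) < 0$ and $1 - z/g(0) > 0$. Therefore $\eta(z) < 0$, i.e.\ $h(z) < g(0)$, throughout $[0, g(0))$, which closes the induction and gives $N^* = \infty$. Combining the two cases delivers both the dichotomy $N^* \in \{2, \infty\}$ and the stated criterion.
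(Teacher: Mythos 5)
Your proof is correct, and in the main case it takes a genuinely different route from the paper. The easy direction ($g^{-1}(0)\geq g(0)\Rightarrow N^*=2$) is handled the same way in both arguments. For $g^{-1}(0)<g(0)$, the paper first proves $g'(0)<-1$ and then splits into two subcases according to the sign of $(g^{-1})'(0)+1$: in the first it shows $h$ is strictly increasing on all of $[0,g(0)]$ and uses $h(g(0))=g(0)$ to get $h(z)<g(0)$; in the second it introduces the auxiliary point $\beta^*=\max\{\beta:(g^{-1})'(\beta)=-1\}$, shows $\beta^*<z_1^*$, and proves that $[\beta^*,g(0))$ is invariant under $h$. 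You bypass the case split entirely by observing that $h=\mathrm{id}+g^{-1}$ is convex and therefore lies below the chord joining $(0,h(0))$ and $(g(0),g(0))$; since $h(0)=g^{-1}(0)<g(0)$, that chord stays strictly below $g(0)$ on $[0,g(0))$, which gives the invariance $h([0,g(0)))\subseteq[0,g(0))$ in one line without any monotonicity of $h$. You also correctly flag the real danger (overshoot of $g(0)$ in the interior) that a naive argument from $h(z)>z$ and the fixed point would miss. The only thing your shorter argument does not buy is the collateral material the paper extracts from its longer proof: the monotonicity of $h$ on $[0,g(0)]$ resp.\ $[\beta^*,g(0)]$, the bound $\beta^*<z_i^*<g(0)$, and the implication $g^{-1}(0)<g(0)\Rightarrow g'(0)<-1$ are all reused in the proofs of Theorem \ref{theorem:subproblem}~(a) and Theorem \ref{theorem:exponential-decay}, so in the paper's economy the case analysis is not wasted; as a self-contained proof of Lemma \ref{lemma:N^*}, however, yours is cleaner.
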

\begin{proof}
Because $g^{-1}(0) \geq g(0) \Rightarrow N^* = 2$
immediately follows from the definitions of $y_i^*$ and
$N^*$, we consider the case of $g^{-1}(0) < g(0)$ below. We first show
that
\begin{equation}
g^{-1}(0) < g(0) \ \Rightarrow\ g'(0) < -1.
\label{eq:g'0<minus1}
\end{equation}
Since $g$ is assumed to be a convex function, its
derivative $g'$ is a non-decreasing function. 
It then follows that if $g'(0) \geq -1$, then 
$g'(y) \geq -1\; \mbox{for $y \geq 0$}$, so that
\begin{equation}
y_1^* = \int_0^{g^{-1}(0)} \dd y 
\geq \int_0^{g^{-1}(0)} (-g'(y)) \dd y
= g(0),
\end{equation}
i.e., $g'(0) \geq -1 \ \Rightarrow\ y_1^* \geq g(0)$.
We thus obtain (\ref{eq:g'0<minus1}), taking the contraposition.

Below, we proceed by considering two exclusive cases, 
under the assumption $g^{-1}(0) < g(0)$.

\smallskip

\noindent
\textbf{Case 1. $(g^{-1})'(0) > -1$: }
Because $g^{-1} : (-\infty, g(0)] \rightarrow [0,\infty)$ is a convex
function as noted above, its derivative $(g^{-1})'$ 
is a non-decreasing function, so that we have from $(g^{-1})'(0) > -1$, 
\begin{equation}
(g^{-1})'(z) > -1,
\quad
0 \leq z \leq g(0).
\label{eq:ginv>-1}
\end{equation}
It then follows from (\ref{eq:h-def}) that
$h$ is a strictly increasing function. Furthermore,
we obtain from  $g^{-1}(z) > 0$  ($-\infty \leq z < g(0)$) and
(\ref{eq:h-fixed-point}), 
\begin{equation}
0 < h(z) < g(0),
\;\;
0 \leq z < g(0).
\label{eq:h(z)-in-0-g0}
\end{equation}
We then have $N^* = \infty$ from the following relation obtained by
the induction with (\ref{eq:z_i-recursion}) and (\ref{eq:h(z)-in-0-g0}):
\begin{equation}
0 < z_i^* < g(0),
\quad
i = 1,2,\ldots.
\end{equation}

\noindent
\textbf{Case 2. $(g^{-1})'(0) \leq -1$: }
From (\ref{eq:g'0<minus1}) and $(g^{-1})'(g(0)) = 1/g'(0)$,
we have (cf.\ (\ref{eq:ginv>-1}))
\begin{equation}
(g^{-1})'(g(0)) > -1.
\end{equation}
Recall that $g$ is assumed to be continuously differentiable, so that
$(g^{-1})'(y)$ is a continuous function. We can then verify that
there exists $\beta \in [0, g(0))$ satisfying $(g^{-1})'(\beta) = -1$,
using $(g^{-1})'(0) \leq -1$ and the intermediate value theorem.
Note that such $\beta$ is not necessarily unique, because $(g^{-1})'$ is not
necessarily strictly increasing. Instead, the set of such $\beta$
is bounded above, so that its maximum value $\beta^*$ is uniquely
obtained:
\begin{equation}
\beta^* = \max\{\beta \in [0, g(0));\, (g^{-1})'(\beta) = -1\}.
\label{eq:beta-def}
\end{equation}
It is then readily verified that
\begin{align}
(g^{-1})'(z) &\leq -1, 
\qquad 0 \leq z \leq \beta^*,
\label{eq:z<beta}
\\
(g^{-1})'(z) &> -1,
\qquad
\beta^* < z \leq g(0).
\label{eq:z>beta}
\end{align}

We have from (\ref{eq:z<beta}), 
\begin{align}
\beta^* 
= 
\int_0^{\beta^*} \dd z 
&\leq 
\int_0^{\beta^*} (-1) \cdot (g^{-1})'(z) \dd z
=
g^{-1}(0) - g^{-1}(\beta^*)
< g^{-1}(0),
\end{align}
so that (\ref{eq:z_i-recursion-base}) and $g^{-1}(0) < g(0)$ imply
\begin{equation}
\beta^* < z_1^* = g^{-1}(0) < g(0).
\label{eq:z_1-in-beta-g0}
\end{equation}
It follows from (\ref{eq:h-def}) and (\ref{eq:z>beta}) that $h(z)$ is
strictly increasing for $\beta^* < z \leq g(0)$.
Furthermore, (\ref{eq:h(z)>z}) implies $h(\beta^*) > \beta^*$.
We can then verify that (cf.\ (\ref{eq:h(z)-in-0-g0})):
\begin{equation}
\beta^* < h(z) < g(0),
\qquad
\beta^* \leq z < g(0).
\label{eq:h(z)-in-beta-g0}
\end{equation}
We then obtain $N^* = \infty$ because the induction using
(\ref{eq:z_1-in-beta-g0}) and (\ref{eq:h(z)-in-beta-g0}) yields
\begin{equation}
\beta^* < z_i^* < g(0),
\quad
i = 1,2,\ldots.
\label{eq:z_i-in-beta-g0}
\qedhere
\end{equation}
\end{proof}

\begin{lemma}\label{lemma:y^*-solution}
For $k = 2,3,\ldots$, the followings hold:
\begin{itemize}
\item[(i)] If $g^{-1}(0) \geq g(0)$, there exists no
vector $\bar{\bm{y}} \in \mathbb{R}^k$
such that  $\bm{f}^{(k)}(\bar{\bm{y}})=\bm{0}$ and $\bar{\bm{y}} > \bm{0}$.
\item[(ii)] If $g^{-1}(0) < g(0)$, $\bar{\bm{y}} = (y_1^*, y_2^*,\ldots,y_k^*)^{\top}$ is the unique solution of $\bm{f}^{(k)}(\bar{\bm{y}})=\bm{0}$ and $\bar{\bm{y}} > \bm{0}$.
\end{itemize}
\end{lemma}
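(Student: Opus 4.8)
The plan is to exploit the observation---recorded in the paragraph preceding the lemma---that the system $\bm{f}^{(k)}(\bar{\bm{y}}) = \bm{0}$ is precisely the coordinatewise form of the forward recursion (\ref{eq:y_i^*-def}). Indeed, $f_i^{(k)}(\bar{\bm{y}}) = 0$ reads $g(\bar{y}_i) = \sum_{n=1}^{i-1}\bar{y}_n$, and since $g$ is a strictly decreasing bijection onto its range, this determines $\bar{y}_i$ uniquely from $\bar{y}_1,\ldots,\bar{y}_{i-1}$, exactly matching the definition of $y_i^*$. Consequently any root of $\bm{f}^{(k)}$ must agree with $y_1^*, y_2^*,\ldots$ as far as the latter is defined; this yields uniqueness for free and reduces both parts to the well-definedness and sign information already assembled in Lemma \ref{lemma:N^*}.

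For part (ii) I would simply apply Lemma \ref{lemma:N^*}: the hypothesis $g^{-1}(0) < g(0)$ gives $N^* = \infty$, so $y_1^*,\ldots,y_k^*$ are well-defined for every $k$, and (\ref{eq:y_i^*-positive-N*}) gives $y_i^* > 0$ for all $i$. Hence $\bar{\bm{y}} = (y_1^*,\ldots,y_k^*)^{\top}$ is a strictly positive root of $\bm{f}^{(k)}$, and by the uniqueness remark it is the only one. This part needs no genuinely new argument.

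For part (i) I would show directly that no strictly positive root exists, working with the first two coordinates (both present since $k \geq 2$). A candidate $\bar{\bm{y}} > \bm{0}$ must satisfy $f_1^{(k)}(\bar{\bm{y}}) = 0$, i.e.\ $g(\bar{y}_1) = 0$ and hence $\bar{y}_1 = g^{-1}(0)$, together with $f_2^{(k)}(\bar{\bm{y}}) = 0$, i.e.\ $g(\bar{y}_2) = \bar{y}_1 = g^{-1}(0)$. Since $g$ is strictly decreasing with maximum $g(0)$ attained only at $0$, one has $g(\bar{y}_2) \leq g(0)$, with equality iff $\bar{y}_2 = 0$. Under $g^{-1}(0) \geq g(0)$ I would then split cases: if $g^{-1}(0) > g(0)$ the equation $g(\bar{y}_2) = g^{-1}(0) > g(0)$ has no solution at all; if $g^{-1}(0) = g(0)$ it forces $\bar{y}_2 = 0$, contradicting $\bar{\bm{y}} > \bm{0}$. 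In both cases no strictly positive root exists, which is claim (i).

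The step I expect to be the main obstacle is the boundary case $g^{-1}(0) = g(0)$ in part (i). Here a root of $\bm{f}^{(k)}$ does exist---namely $(g^{-1}(0), 0, \ldots, 0)^{\top}$, in agreement with $N^* = 2$---but it violates strict positivity, so the argument must carefully distinguish ``the equation has no solution'' from ``the only solution has a vanishing coordinate'' instead of merely asserting non-solvability. Keeping in view that the domain of $g^{-1}$ is $(-\infty, g(0)]$, so that $g^{-1}(\bar{y}_1)$ is undefined exactly when $\bar{y}_1 > g(0)$, is what makes this two-case split clean and rigorous.
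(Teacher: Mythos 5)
Your proposal is correct and follows essentially the same route as the paper, which simply asserts that the lemma ``immediately follows'' from the definition of $f_i^{(k)}$, the recursion (\ref{eq:y_i^*-def}), Lemma \ref{lemma:N^*}, and the definition of $N^*$; you have merely spelled out the details (the equation system being the coordinatewise recursion, uniqueness from strict monotonicity of $g$, and the dichotomy at $i=2$ corresponding to $N^*=2$ versus $N^*=\infty$). Your careful handling of the boundary case $g^{-1}(0)=g(0)$ is exactly the distinction the paper encodes in the phrase ``either equal to zero or not well-defined.''
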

\begin{proof}
Lemma \ref{lemma:y^*-solution} immediately follows from
(\ref{eq:f_i-def}), (\ref{eq:y_i^*-def}), 
Lemma \ref{lemma:N^*}, and the definition of $N^*$.
\end{proof}

We are now in a position to prove Lemma \ref{lemma:optimal_solution}.

\begin{proof}[Proof of Lemma \ref{lemma:optimal_solution}]

We first consider the case of $g^{-1}(0) \geq g(0)$.
In this case, we have $\phi^{(k)} > 0$ for $k = 2,3,\ldots$
from Lemma \ref{lemma:y-phi^k} (i) and Lemma \ref{lemma:y^*-solution} (i). 
Note that $y_1 = g^{-1}(0)$ is the optimal solution of
($\mbox{P}^{(1)}$), and its optimal value is also equal to
$g^{-1}(0)$. Obviously, we have $\phi^{(2)} = 1$, so that the optimal
value of ($\mbox{P}^{(2)}$) equals to $g^{-1}(0)$.
Owing to Lemma \ref{lemma:optimal-phi}, the optimal value of 
($\mbox{P}^{(3)}$) is then equal to $g^{-1}(0)$, which implies
$\phi^{(3)} = 2$. Therefore, proceeding in the same way,
we can readily show that $\phi^{(k)} = k-1$ and the optimal value of 
$(\mbox{P}^{(k)})$ is equal to $g^{-1}(0)$ for $k = 2,3,\ldots$.
Because (\ref{eq:y^*0}) achieves the optimal value $g^{-1}(0)$ of
$(\mbox{P}^{(k)})$, we obtain Lemma \ref{lemma:optimal_solution}
(i).

We next consider the case of $g^{-1}(0) < g(0)$.
Note first that the well-definedness of $\{y_i\}_{i=1,2,\ldots}^*$ 
and (\ref{eq:y_i-decrease}) have been proved in 
(\ref{eq:y_i^*-positive-N*}), (\ref{eq:y_i^*-decrease-N*}), and
Lemma \ref{lemma:N^*}. It then follows from Lemma \ref{lemma:y^*-solution} 
(ii) that  $(y_1^*, y_2^*,\ldots,y_{k-\phi^{(k)}}^*)^{\top} \in \mathbb{R}^k$ is
the unique solution of  $\bm{f}^{(k-\phi^{(k)})}(\bar{\bm{y}}) = \bm{0}$ 
and  $\bar{\bm{y}} > \bm{0}$. Therefore, from Corollary
\ref{corollary:k-phi^k} and Lemma \ref{lemma:y-phi^k} (ii), 
we can verify that $(y_1^*, y_2^*,\ldots,y_{k-\phi^{(k)}}^*)^{\top}$
is an optimal solution of $(\mbox{P}^{(k-\phi^{(k)})})$ and that
from Lemma \ref{lemma:optimal-phi}, 
$
\max\{u^{(k)}(\bm{y});\, \bm{y} \in \mathcal{A}^{(k)}\}
= \sum_{i=1}^{k-\phi^{(k)}} y_i^*.
$
Because $(y_1^*, y_2^*,\ldots,y_k^*)^{\top} \in
\mathcal{A}^{(k)}$, this equation implies
$\sum_{i=1}^k y_i^* \leq \sum_{i=1}^{k-\phi^{(k)}} y_i^*$.
Therefore, from (\ref{eq:y_i^*-positive-N*}) we have
$\phi^{(k)}=0$, so that $(y_1^*,
y_2^*,\ldots,y_k^*)^{\top}$ is an optimal solution of $(\mbox{P}^{(k)})$.
\end{proof}

\section{Proof of Theorem \ref{theorem:subproblem}}
\label{appendix:subproblem}

We first consider (a). Note first that Theorem
\ref{theorem:optimal_solution_Sq} and Lemma
\ref{lemma:g_q-changepoint-q} imply
\begin{equation}
x_{q,N}^* \geq g_q^{-1}(0), \quad 0 < q < q_0,
\qquad
x_{q,N}^* = g_q^{-1}(0), \quad q \geq q_0,
\label{eq:x_q-q>q_0}
\end{equation}
so that we obtain 
$
\lim_{q \to 0+} x_{q,N}^* 
\geq 
\lim_{q \to \infty} g_q^{-1}(0) = \infty,
$ and
$
\lim_{q \to \infty} x_{q,N}^* 
= \lim_{q \to \infty} g_q^{-1}(0) = 0.
$

For $q \geq q_0$, we have from (\ref{eq:x_q-q>q_0}) that
$x_{q,N}^*$ is continuous and strictly decreasing in $q$
(cf.\ (\ref{eq:g_q-def})). We then assume $0 < q < q_0$. 
Let $h_q: (-\infty, g_q(0)] \to \mathbb{R}$ be defined as
(cf.\ (\ref{eq:h-def}))
\begin{equation}
h_q(s) = s + g_q^{-1}(s),
\quad
s > 0.
\end{equation}
We define $s_{q,i}$ ($i = 1,2,\ldots,N$) as
(cf.\ (\ref{eq:z_i-def}) and (\ref{eq:z_i-recursion}))
\begin{equation}
s_{q,1} = g_q^{-1}(0),
\quad
s_{q,i} = h_q(s_{i-1}),
\;\;
i = 2,3,\ldots,N.
\end{equation}
It is then readily verified from Theorem \ref{theorem:optimal_solution_Sq}
that 
\begin{equation}
x_{q,N}^* = s_{q,N},
\quad
0 < q < q_0.
\label{eq:x_q=s_q}
\end{equation}
$x_{q,N}^*$ is thus a continuous function of $q$. 
By definition, $s_{q,1}$ and $h_q(s)$ (for a fixed $s$)
are strictly decreasing with respect to $q$ (cf.\ (\ref{eq:g_q-def})). 
Furthermore, as shown in the proof of Lemma \ref{lemma:N^*}, 
for a fixed $q$, $h_q(s)$ is strictly increasing with respect to $s$
for $s_{q,1}=g_q^{-1}(0) < s \leq g_q(0)$.
We can then show by induction that
$
s_{q,i} > s_{q',i},
$
($i = 1,2,\ldots,N$) for any $0 < q < q' < q_0$, which and
(\ref{eq:x_q=s_q}) prove that $x_{q,N}^*$ is continuous and strictly
decreasing for $0 < q < q_0$.

What remains is to prove that $x_{q,N}^*$ is continuous at $q=q_0$.
By definition of $q_0$, we have
$\lim_{q \to q_0-}g_q^{-1}(0) = g_{q_0}(0)$,
so that 
$
\lim_{q \to q_0-}s_{q,i} = g_{q_0}(0)
$ ($i= 1,2,\ldots,N$).
$x_{q,N}^*$ is thus continuous at $q=q_0$ because
$
\lim_{q \to q_0+}x_{q,N}^* = x_{q_0}^* = g_{q_0}^{-1}(0) = g_{q_0}(0).
$

We then consider (b). We first show the following relations:
\begin{align}
x_{q,N}^* > L \ \Rightarrow\ q < q_{\sup}^*,
\qquad
x_{q,N}^* &< L \ \Rightarrow\ q > q_{\sup}^*.
\label{eq:if-x_qN>L-or-x_qN<L}
\end{align}
Suppose $x_{q,N}^* > L$ and define 
$\hat{q} := q x_{q,N}^*/L$ and $\hat{d}_i := d_{q,i}^* L/x_{q,N}^*$
($i = 1,2,\ldots,N$), where we have $\hat{q} > q$ and $\hat{d}_i \leq
d_{q,i}^*$. It is then verified that 
$(\hat{q}, \hat{d}_1, \hat{d}_2,\ldots,\hat{d}_N)^{\top}$
is a feasible solution of (\ref{opt:original}), as
$\sum_{i=1}^N \hat{d}_i = L$ and $R(\hat{d}_i) \geq R(d_{q,i}^*)$.
We thus obtain $q_{\sup}^* \geq \hat{q} > q$ from the optimality of
$q_{\sup}^*$, which proves the first relation in (\ref{eq:if-x_qN>L-or-x_qN<L}).
On the other hand, the second relation follows from that
\begin{equation}
x_{q,N}^* < L \ \Rightarrow\ 
\mbox{
For any $\bm{d} \in \mathbb{R}^N$,
$(q, \bm{d})$ is an 
infeasible solution of (\ref{opt:original}),
}
\label{eq:x_qN-infeasible}
\end{equation}
which is proved by contradiction: 
if there exists $\bm{d} \in \mathbb{R}^N$ such that $(q, \bm{d})$ is a
feasible solution of (\ref{opt:original}), $\bm{d}$ is also a feasible
solution of (\ref{opt:subproblem}) satisfying $\sum_{i=1}^N d_i = L$,
contradicting $x_{q,N}^* < L$.

Taking the contrapositions of (\ref{eq:if-x_qN>L-or-x_qN<L}),
we have $q \geq q_{\sup}^* \Rightarrow x_{q,N}^* \leq L$
and $q \leq q_{\sup}^* \Rightarrow x_{q,N}^* \geq L$,
which implies $q = q_{\sup}^* \Rightarrow x_{q,N}^* = L$.
Owing to Theorem \ref{theorem:subproblem} (a), this also implies
$x_{q,N}^* = L \Rightarrow q = q_{\sup}^*$, so that we obtain
the last equivalence relation in (\ref{eq:subproblem}). 
Furthermore, $(q_{\sup}^*, \bm{d}_{q_{\sup}^*}^*)$ is an optimal
solution of (\ref{opt:original}) because it is a feasible solution
with the optimal objective value $q_{\sup}^*$.

Finally, the first and second equivalence relations in (\ref{eq:subproblem})
are immediately obtained, noting that Theorem \ref{theorem:subproblem}
(a) and the last relation in (\ref{eq:subproblem}) imply
$q < q_{\sup}^* \Leftrightarrow x_{q,N}^* > x_{q_{\sup}^*,N}^* = L$
and $q > q_{\sup}^* \Leftrightarrow x_{q,N}^* < x_{q_{\sup}^*,N}^* = L$.
\qed

\section{Proof of Theorem \ref{theorem:exponential-decay}}
\label{appendix:exponential-decay}

We consider the slightly generalized problem ($\mbox{P}^{(k)}$)
considered in Appendix \ref{appendix:optimal_solution_Sq},
assuming $g^{-1}(0) < g(0)$.
It is sufficient to show that under this assumption, 
\begin{equation}
\gamma := 1 + \frac{1}{g'(0)} \in (0,1),
\label{eq:gamma-01-general}
\end{equation}
and that $y_i^*$ defined in (\ref{eq:y_i^*-def}) satisfies
the followings: if $(g^{-1})'(0) > -1$, then
\begin{equation}
0 < y_i^* \leq \gamma^{i-1}g^{-1}(0),
\;\;
i = 1,2,\ldots,
\label{eq:y_i-bound-1}
\end{equation}
and otherwise
\begin{equation}
0 < y_i^* \leq \gamma^{i-2}g^{-1}(g^{-1}(0)),
\;\;
i = 2,3,\ldots.
\label{eq:y_i-bound-2}
\end{equation}
Because (\ref{eq:gamma-01-general}) immediately follows from
(\ref{eq:g'0<minus1}), we show (\ref{eq:y_i-bound-1}) and 
(\ref{eq:y_i-bound-2}) below.

We first consider the case $(g^{-1})'(0) > -1$. As shown in the 
proof of Lemma \ref{lemma:N^*}, $h(z)$ ($0 \leq z \leq g(0)$) is
a strictly increasing function in this case. In addition,
we have $(g^{-1})'(z) \leq (g^{-1})'(g(0))$ ($0 \leq z \leq g(0)$)
because $g^{-1}(z)$ is a convex function.
We then have for any $0 \leq t_1 \leq t_2 \leq g(0)$,
\begin{align}
|h(t_2) - h(t_1)| 
&= t_2 - t_1 + g^{-1}(t_2) - g^{-1}(t_1)
\nonumber
\\
&=
t_2 - t_1  
+ \int_{t_1}^{t_2} (g^{-1})'(t) \dd t
\leq
t_2 - t_1  
+(g^{-1})'(g(0)) \int_{t_1}^{t_2} \dd t
=
\gamma |t_2 - t_1|.
\label{eq:h-contraction}
\end{align}
Furthermore, it follows from (\ref{eq:h-fixed-point}) and (\ref{eq:h(z)-in-0-g0}) 
that $0 \leq h(z) \leq g(0)$ ($0 \leq h(z) \leq g(0)$). 
Therefore, we can verify from (\ref{eq:z_i-recursion}) and Banach
fixed point theorem that $\{z_i^*\}_{i=0,1,\ldots}$ converges to the
unique fixed point of $h(z)$ ($0 \leq z \leq g(0)$) given by
$z = g(0)$. In addition, we have from (\ref{eq:z-increasing}) and
(\ref{eq:h-contraction}),
\begin{align}
y_{i+1}^* &= z_{i+1}^* - z_i^* 
= h(z_i^*) - h(z_{i-1}^*) 
\leq \gamma(z_i^* - z_{i-1}^*)
= 
\gamma y_i^*,
\qquad
i = 1,2,\ldots
\label{y-gamma1}
\end{align}
so that (\ref{eq:y_i-bound-1}) is obtained by induction using
$y_1^* = g^{-1}(0)$. 

We next consider the case $(g^{-1})'(0) \leq -1$.
As shown in the proof of Lemma \ref{lemma:N^*}, $h(z)$ is strictly
increasing for $\beta^* \leq z \leq g(0)$, where $\beta^*$ is defined
in (\ref{eq:beta-def}). We can then show in the same way as
(\ref{eq:h-contraction}) that for any $\beta^* \leq t_1 \leq t_2 \leq
g(0)$, 
\begin{align}
|h(t_2) - h(t_1)| \leq \gamma |t_2 - t_1|.
\label{eq:h-contraction2}
\end{align}
Also, we have $\beta^* \leq h(z) \leq g(0)$ ($\beta^* \leq z \leq g(0)$) 
from (\ref{eq:h-fixed-point}) and (\ref{eq:h(z)-in-beta-g0}). 
It thus follows from (\ref{eq:z_1-in-beta-g0}) and Banach fixed point
theorem that $\{z_i^*\}_{i=0,1,\ldots}$ converges to the unique fixed
point $z = g(0)$ of $h(z)$ in $\beta^* \leq z \leq g(0)$.
Furthermore, in the same way as (\ref{y-gamma1}), 
we have $y_{i+1}^* \leq \gamma y_i^*$ ($i = 2,3,\ldots$)
from (\ref{eq:z-increasing}) and (\ref{eq:h-contraction2}),
so that we obtain (\ref{eq:y_i-bound-2}) noting that $y_2^* =
g^{-1}(g^{-1}(0))$ (see (\ref{eq:y_i^*-def})).
\qed

\end{document}